\tikzset{shorten <>/.style={shorten >=#1,shorten <=#1}}
\newtheorem{theorem}{Theorem}[section]
\newtheorem{lemma}[theorem]{Lemma}
\newtheorem{proposition}[theorem]{Proposition}
\newtheorem{corollary}[theorem]{Corollary}
\newtheorem{example}[theorem]{Example}
\declaretheoremstyle[    headfont=\bfseries, bodyfont=\upshape]{normalhead}
\declaretheorem[style=normalhead]{definition}
\newcommand{\Set}{\mathrm{Set}}
\newcommand{\Cat}{\mathrm{Cat}}
\newcommand{\Dist}{\mathrm{Dist}}
\newcommand*{\Ob}[1]{\mathrm{ob}(#1)}
\newcommand{\CAT}{\mathrm{CAT}}
\newcommand{\MRel}{\mathrm{MRel}}
\newcommand{\SCat}{\textit{S}\text{-}\mathrm{CatSym}}
\newcommand{\SDist}{\textit{S}\text{-}\mathrm{Dist}}
\newcommand*{\psalg}[1]{S\text{-PAlg}_{\bic{#1}}}
\newcommand*{\alg}[1]{S\text{-Alg}_{\bic{#1}}}
\newcommand*{\lalg}[1]{S\text{-LAlg}_{\bic{#1}}}
\newcommand{\Of}{\mathbb{O}_f}
\newcommand*\subst[3]{ #1 \sub{#3}{#2}}
\newcommand*{\sub}[2]{\{#1/#2\}}
\newcommand*{\fv}[1]{\mathsf{fv}(#1)}
\newcommand*{\length}[1]{\mathsf{len}(#1)}
\newcommand*{\sem}[2]{ \llbracket #2 \rrbracket_{#1}}
\newcommand*{\size}[1]{\text{s}\left(#1\right)}
\newcommand*{\subis}[4]{\mathsf{sub}^{#1,#2,#3, #4}}
\newcommand*{\seq}[1]{\langle #1 \rangle}
\newcommand*{\seqdots}[3]{\seq{#1_{#2}, \dots, #1_{#3} } }
\newcommand*\rappl[2]{#1 #2}
\newcommand*\appl[2]{#1 #2}
\newcommand*{\la}[1]{\lambda #1.}
\newcommand*{\todi}{\nrightarrow}
\newcommand*{\Tm}{M}
\newcommand*{\Tmtwo}{N}
\newcommand*{\morpCone}{\eta}
\newcommand*{\ty}{a}
\newcommand*{\tyl}{\vec{a}}
\newcommand*{\bic}[1]{\mathcal{#1}}
\newcommand*{\doublerightarrow}[2]{\mathrel{
  \settowidth{\@tempdima}{$\scriptstyle#1$}
  \settowidth{\@tempdimb}{$\scriptstyle#2$}
  \ifdim\@tempdimb>\@tempdima \@tempdima=\@tempdimb\fi
  \mathop{\vcenter{
    \offinterlineskip\ialign{\hbox to\dimexpr\@tempdima+1em{##}\cr
    \rightarrowfill\cr\noalign{\kern.5ex}
    \rightarrowfill\cr}}}\limits^{\!#1}_{\!#2}}}
\newcommand*{\triplerightarrow}[1]{\mathrel{
  \settowidth{\@tempdima}{$\scriptstyle#1$}
  \mathop{\vcenter{
    \offinterlineskip\ialign{\hbox to\dimexpr\@tempdima+1em{##}\cr
    \rightarrowfill\cr\noalign{\kern.5ex}
    \rightarrowfill\cr\noalign{\kern.5ex}
    \rightarrowfill\cr}}}\limits^{\!#1}}}
\newcommand{\colim@}[2]{\vtop{\m@th\ialign{##\cr
    \hfil$#1\operator@font lim$\hfil\cr
    \noalign{\nointerlineskip\kern1.5\ex@}#2\cr
    \noalign{\nointerlineskip\kern-\ex@}\cr}}}
\newcommand{\colim}{%
  \mathop{\mathpalette\colim@{\rightarrowfill@\textstyle}}\nmlimits@
}
\begin{document}
%
\title{Intersection Type Distributors}
\author{\IEEEauthorblockN{Federico Olimpieri} \IEEEauthorblockA{LIPN, Université Sorbonne Paris Nord, France\\Email: olimpieri@lipn.univ-paris13.fr}}


%


\IEEEoverridecommandlockouts
\IEEEpubid{\makebox[\columnwidth]{978-1-6654-4895-6/21/\$31.00~
\copyright2021 IEEE \hfill} \hspace{\columnsep}\makebox[\columnwidth]{ }}
\maketitle

\begin{abstract}
We study a family of distributors-induced bicategorical models of $\lambda $-calculus, proving that they can be syntactically presented via intersection type systems. We first introduce a class of 2-monads whose algebras are monoidal categories modelling resource management. We lift these monads to distributors and define a parametric Kleisli bicategory, giving a sufficient condition for its cartesian closure. In this framework we define a proof-relevant semantics: the interpretation of a term associates to it the set of its typing derivations in appropriate systems. We prove that our model characterize solvability, adapting reducibility techniques to our setting. We conclude by describing two examples of our construction.\end{abstract}

%

\section{Introduction}

\subsubsection{A Logical Approach to Resources}

The notion of resource is very important in Computer Science. A resource can be copied or deleted, and these two basic operations affect the behavior of programs. Hence, a mathematical approach to the notion of resource is naturally required, as it can clarify the understanding of how programs behave. A well-known resource-sensitive mathematical framework is \emph{linear logic}, introduced by Girard \cite{gir:ll} in the $ 80s .  $ The decomposition of the intuitionistic arrow
\[ A \Rightarrow B = \oc A \multimap B \]
expresses the general non-linear behaviour of programs. The $ \oc $ construction says that we are allowed to copy or delete the input as many times as needed. Linear logic is thus immediately connected to \emph{quantitative} aspects of computation.

\subsubsection{Resources \textit{via}  Types}

A few years before Girard's introduction of linear logic, Coppo and Dezani \cite{dez:int} proposed \emph{intersection types}, a type-theoretic framework sensitive to the fact that a $ \lambda $-term can be typed in several ways. In order to define an intersection type system, they add another constructor to the syntax: $ a \cap b $. Then typability with an intersection type is equivalent to being typable with \emph{both} types $a $ and $b .$ This kind of type disciplines proved to be very useful to characterize fundamental notions of normalization in $ \lambda $-calculus ($ e.g.,$ head-normalization, $\beta $\nobreakdash-normalization, strong normalization)  \cite{kri:lam, bern:strong, kes:int}. Moreover, if the intersection type $a \cap b $ is  \emph{ non-idempotent} \cite{gard:int, carv:sem}, $ i.e. ,$ $a \cap a \neq a ,$ the considered type system is \emph{resource sensitive}. In that case, the arrow type 
\[ a_1 \cap \dots \cap a_k \Rightarrow a     \] 
encodes the \emph{exact} number of times that the program needs its input during computation. The resource awareness of non-idempotent intersection has been used to prove normalization and standardization results by \emph{combinatorial} means \cite{kes:int}, to study infinitary computation \cite{vial:klop} and to express the execution time of programs and proof-nets \cite{carv:sem, carv:weak, carv:strong}. The non-idempotent intersection type system $ \mathcal{R}, $ is also strongly connected to the \emph{Taylor expansion} of $ \lambda $-terms \cite{er:tay, carv:sem}. Thus, resource sensitive intersection corresponds also to \emph{linear approximation}. Another important feature of intersection type systems is that they determine a class of \emph{filter models} for pure $ \lambda $-calculus \cite{dez:filt}. The correspondence between intersection types and Engeler-like models is also well-known \cite{hyland:engeler}. Hence intersection types are both  \emph{syntactic} and \emph{semantic} objects. 

\subsubsection{A Categorical Approach}

The semantic side of intersection types is connected also to categorical semantics. A simple and informative categorical model for $ \lambda $-calculus is the \emph{relational model} ($ \MRel$)\footnote{For a general survey on relational semantics we refer to \cite{ong:rel}. See also \cite{manzo:rel} for results on the lambda-theories induced by this kind of models.}. Objects of $\MRel $ are sets, while morphisms are \emph{multirelations} $ f \subseteq M_{f}(A) \times B $, where $ M_f(A) $ is the free commutative monoid over $A $.  This model arises from the linear logic decomposition. It is well-known that this relational semantics  corresponds to the non-idempotent intersection type system $ \mathcal{R} $ \cite{carv:sem}. This correspondence says that the categorical interpretation of a $\lambda$-term can be presented in a \emph{concrete} way, as a form of typing assignment. In particular, the intersection type constructor $ \cap $ corresponds to the product in the free commutative monoid construction that gives the interpretation of the linear logic exponential connective. This fact suggests the possibility to model, in all generality, intersection types \textit{via} \emph{monads}. With some relevant modifications, one can also achieve in this way an idempotent intersection \cite{er:collapse, er:cbpv}. 

\subsubsection{Lifting to Bicategories}

The idea of a bidimensional semantics for $ \lambda $-calculus was first presented by Seely \cite{Seely:2sem} and further studied in \cite{tom:2cat}. The passage from $ 2$-category to bicategories consists in a weakening of the structure. In particular, associativity and identity laws for horizontal composition are now only up to coherent isomorphisms. In this setting, there is a natural generalization of the category of relations: the bicategory of \emph{distributors} ($ \Dist $). A relation $ f \subseteq A \times B $ is the same as its characteristic function $   \chi_f : A \times B \to \{ 0, 1 \} . $ In particular, the former function naturally induces a functor from $A \times B $, taken as discrete category, to the 2 elements category. It is then natural to relax the hypothesis and consider functors of the shape $     F : B^o \times A \to \Set $ where $ A $ and $ B $ are arbitrary small categories. These functors are called distributors\footnote{Another popular name for this kind of structures is \emph{profunctor}. }. Cattani and Winskel \cite{winsk:prof} proposed a distributor-induced semantics of concurrency. In particular, they also gave a distributor model of linear logic, generalizing Scott's domains. In subsequent papers, Fiore, Gambino, Hyland and Winskel \cite{fiore:mods, fiore:esp} introduced the bicategory of \emph{generalized species of structures $(\mathsf{Esp})$}, a rich framework encompassing both multirelations and Joyal's combinatorial species \cite{joyal:esp}. They also proved that $\mathsf{Esp}$ is cartesian closed and, hence, a bicategorical model for $ \lambda $-calculus. 
 
Inspired by their result, Tsukada, Asada and Ong \cite{tao:gen, tao:pdist} showed that the generalized species semantics of $\lambda $-calculus has a syntactic counterpart in the \emph{rigid Taylor expansion} of $ \lambda $-terms. At the same time, building on \cite{mel:ts, hyland:lam}, Mazza, Pellissier and Vial \cite{mazza:pol} presented a higher categorical approach to intersection types and linear approximation, rooted in the framework of \emph{multicategories} and discrete distributors. 

This bicategorical setting has several advantages. First, one can model term rewriting in a categorical way, \textit{via} 2-cells between terms denotations. Second, as shown in \cite{mazza:pol}, 2-dimensional categorical constructions can be useful to reason \emph{parametrically} on syntax, sparing a lot of time that normally is lost in checking special cases. Third,  distributors are an example of \emph{categorification}: the set-theoretic notion of relations is replaced by a category-theoretic one. It turns out that this replacement makes explicit relevant information that was hidden in the non-categorified setting.

\subsubsection{Our Contribution}

Building on \cite{fiore:esp, fiore:rel, joyal:op, tao:gen, marsd:quant}, we define a family of distributor-induced denotational semantics of $\lambda$-calculus. These bicategories of distributors are Kleisli bicategories for an appropriate collection of pseudomonads, the \emph{resource monads}. These are 2-monads over categories, whose algebras are some special kind of strict monoidal categories. We can sum up the results of the paper in a procedural way: 
\begin{enumerate}
\item[(i)] Take a resource monad $S$ and apply the construction of \cite{fiore:rel} to obtain a pseudomonad $\tilde{S}$ (Section \ref{resourcemon}).
\item[(ii)] Consider the Kleisli bicategory of $ \tilde{S}$, $ \SDist .$ Its opposite bicategory $(\SDist )^o = \SCat$, the bicategory of $S $-\emph{symmetric sequences},\footnote{ A parametric generalisation of the bicategory of categorical symmetric sequences introduced in \cite{joyal:op} and biequivalent to generalized species.} is cartesian closed if the algebras of $S $ are symmetric strict monoidal categories (Section \ref{scat}). 
\item[(iii)] Consider the $ \lambda $-calculus semantics induced by $ \SCat$ (Section \ref{sem}). Following the construction presented in Section \ref{denint}, get the parametric \emph{category of types} $D_A $ and intersection type system $ E_A,$ generated by a small category $A $ of \emph{atomic types} and the resource monad $S $. 
\item[(iv)] By the results of Section \ref{intred}, the considered type system is a \emph{proof relevant} denotational semantics for $ \lambda$\nobreakdash-calculus. The distributor that interprets a $ \lambda $-term $ M $, its \emph{intersection type distributor}, is defined in the following way:
\[  \sem{\vec{x}}{M}(\Delta, a) = \left \{  \begin{prooftree}
\hypo{ \tilde{ \pi }  }
\ellipsis{}{ \Delta &\vdash M : a} \end{prooftree} \right\}  \] 
where $ \tilde{\pi} $ is an equivalence class of \emph{typing derivations}, $a $ is a type and $ \Delta $ is a type context.     
 \end{enumerate}
 The equivalence relation is induced by composition in the appropriate bicategory $\SCat. $ The equivalence is crucial, since it forces the preservation under reduction not only of \emph{typability}, but of the \emph{amount} of classes of typing derivations. We remark that this refines and improves the standard relational semantics, where the denotation of a term is just a \emph{test of typability} and do not give any information about derivations.
As in \cite{mazza:pol}, our construction gives rise to four intersection type systems, \emph{linear}, \emph{affine}, \emph{relevant} and \emph{cartesian} ones.
The structure of the resource monad $S $ gives the kind of intersection connective. For example, the $2 $\nobreakdash-monad for symmetric strict monoidal categories determines a non-idempotent (linear) intersection. By contrast, the $ 2$-monad for cartesian categories determines an intersection that admits \emph{duplication} and \emph{erasing} of resources. 

Moreover, our model internalizes subtyping in a categorical framework: the preorder relation $ a \leq b$ between intersection types is replaced by an arrow $ f : a \to b $ in an appropriate \emph{category of types}. The intuition behind it is that $f $ is now a \emph{witness} of subtyping.  The construction of morphisms between types naturally generalizes the standard subtyping rules, as expected. 

The strength of our approach is twofold. First, we are able to give a concrete presentation of a relevant class of quite abstract and esoteric semantics for $  \lambda $-calculus. Second, this presentation determines a parametric theory of intersection types. In particular, our theory can account for proof-relevance, subtyping and denotational semantics.

\subsubsection{Discussion of Related Work}

\begin{itemize}

\item[(i)] Our approach is independent from \cite{fiore:bicatt, fiore:ttbic2, saville:cc}. Fiore and Saville presented a bicategorical extension of simply typed $\lambda $-calculus that corresponds to the appropriate type theory for cartesian closed bicategories. Our intersection type systems can be seen as an approximation theory for simple types and arise by making explicit the structure of a \emph{special class} of bicategorical models.

\item[(ii)]We vastly generalize the results of \cite{ol:bang}, where  a categorification of non-idempotent intersection type is presented. Our parametric construction over resource monads determines a categorification of linear (non-idempotent), affine, relevant and cartesian (idempotent) ones. Then \cite{ol:bang} becomes a special case of our method (Section \ref{app})\footnote{The only sensible difference is that while in \cite{ol:bang} an untyped call-by-push value calculus \cite{levy:cbpv, gg:bang, guerr:bang} is considered, in the present setting we chose pure $ \lambda $-calculus. Our choice is only instrumental to avoid additional technicalities. }.

 \item[(iii)]  In \cite{mazza:pol} a parametric 2-categorical construction of intersection type systems is presented. Intersection type systems are seen as special kind of \emph{fibrations}. This contribution can be seen as a ``syntactic categorification'' of intersection types. Indeed, while the construction of Mazza \textit{et al.} is an elegant and very general approach to intersection type disciplines, that also allows to prove normalization theorems in a modular way, it does not provide a \emph{type-theoretic denotational semantics}\footnote{Given $ M \to_{\beta} N , $ the type theoretic structure associated to $ M $ is not, in general, isomorphic to the one of $ N $ \cite{mazza:hdr}[pp. 65-66]. }. Moreover, their work is limited to the \emph{discrete} case, \textit{i.e.}, they do not consider subtyping. We shall see that what is needed to obtain both denotational semantics and subtyping is highly non-trivial.
 
 \item[(iv)] Our work is closely related to the rigid Taylor expansion semantics \cite{tao:gen}. However, Tsukada \emph{et al.} contribution is restricted syntactically to $\eta $-long simply typed terms and semantically to generalized species over groupoids. The generalization of their approach to the whole simply typed and untyped $ \lambda $-calculus and to the parametric bicategory $ \SCat $ is, again, highly non-trivial and is one of the goals of our work\footnote{However, in the present paper we consider intersection types instead of terms approximations. In \cite[Chapter 4]{ol:thesis}  is shown that a naive generalization of Tsukada, Asada and Ong's approach fails. The general notion of approximation needs to take into account the subtyping information given by typing derivations. }.
\end{itemize}

\subsubsection{Outline}
Section \ref{back} introduces some categorical background. The main goal is to define a family of Kleisli bicategories of distributors, associated with the lifting of a suitable collection of doctrines.
In Section \ref{sem} we build a family of bicategorical non-extensional models for $\lambda $-calculus and we use it to define an intersection type-theoretic denotation for terms. In Section \ref{h-norma} we give a parametric proof of the head-normalization theorem for our denotational models. We conclude in Section \ref{app} by considering two concrete examples of our construction.
\subsubsection{Notations}
  Given a category $C $ we write $C^{o} $ for its opposite category. Given a bicategory $\bic{C} $ we write $ \bic{C}^{op} $ for the bicategory obtained by reversing the $ 1$\nobreakdash-cells of $\bic{C} $ but not the $2 $-cells. We write $\CAT $ for the $2 $\nobreakdash-category of locally small categories, functors and natural transformations and $\Cat $ for its full sub-2-category of small categories. Given (bi)categories $ A_1, \dots, A_n $ we denote as $ \prod_{i=1}^{n} A_i $ their product. Given (bi)categories $ A_1, \dots, A_n $ we denote as either $ \bigsqcup_{i=1}^{n} A_i $ or $ \sum_{i=1}^{n} A_i$ their coproduct. Given categories $ A,B$, we use either $ [A,B] $ or $ \CAT(A,B) $ to denote their functor category. We denote the initial category as $ \emptyset .$ We use linear logic notations for the general notions of cartesian product, terminal object, etc.

\section{Categorical Background} \label{back}

We suppose that the reader is familiar with the basics of bicategory theory, for which we refer to \cite{borc:cat}.

\subsection{Integers and Lists}

 We consider the category $ \Of $ where object are \emph{finite ordinals} $ [n] = \{ 1, \dots, n \} $, for $ n \in \mathbb{N}, $ and morphisms are functions.  The category $ \Of $ is symmetric strict monoidal, with tensor product given by addition: $[n] \oplus [m] = [n + m] .$ Let $ k_1, \dots, k_n  $ be natural numbers and  $ \alpha : [m] \to [n] $ we define $ \bar{\alpha} : [\sum_{ j = 1   }^{m} k_{\alpha(j)}] \to [ \sum_{ i = 1  }^{n} k_{i} ]  $ as follows: \[ \bar{\alpha}(\sum_{j = 1}^{l-1} k_{\alpha(j)} + p  ) = \sum_{ i = 1}^{\alpha(l)-1} k_i + p  \]with $  l \in [m] ,  $ and $  1 \leq p \leq k_{\alpha(l)} . $   If we apply the former construction to bijections, we get the symmetries of the tensor product. 
 
  From $ \Of $ we can build categories of indexed families of objects over finite ordinals. Let $ \seq{a_1, \dots, a_k} $ be a list of elements of $A .$  We write $ \length{\vec{a}} $ for its length. We denote lists as $ \vec{a}, \vec{b}, \vec{c} \dots $ Given a list $ \vec{a} = \seqdots{\ty}{1}{k} $ and a function $ \alpha : [k] \to [k'] $ we define the \emph{right action of $\alpha $ on $ \vec{a} $} as $\vec{a}\{\alpha \} = \seqdots{\ty}{\alpha(1)}{\alpha(k)} .  $ Given a category $A$, we define the category $ \Of A$  of \emph{lists} of $A $, as follows: 

\begin{enumerate}
\item $ \text{Obj}(\Of A) = \{ \seq{a_1, \dots, a_n} \mid a_i \in A      \}. $
\item $ \Of A (\seq{a_1,\dots, a_n}, \seq{b_1, \dots, b_m}) = \{  \seq{\alpha, f_{1}, \dots, f_{m}} \mid \alpha : [m] \to [n] $ and $ f_{i} : a_{\alpha(i)} \to b_i       \}. $
\item For $ \seq{\alpha, \vec{f}} : \vec{a} \to \vec{b} $ and $ \seq{\beta, \vec{g}} : \vec{b} \to \vec{c}, $ composition is given by 

\[       \seq{\beta, \vec{g}} \circ  \seq{\alpha, \vec{f}} = \seq{\alpha \circ \beta, \vec{g} \circ \vec{f}\{\alpha \}  }     \]
\end{enumerate}

The category $ \Of A $ is cartesian monoidal, with products given by lists concatenation. For $ \seq{\tyl_1, \dots, \tyl_n} $ and $ \alpha : [m] \to [n] $  with $ \length{\vec{a}_i} = k_i $ we define \[ \alpha^{\star} : \bigoplus_{i = 1}^{n} \tyl_i \to \bigoplus_{j= 1}^{m} \tyl_{\alpha(j)} \] as $ \alpha^{\star} = \seq{\bar{\alpha}, \vec{1}_{\bigoplus_{i=1}^{k}\vec{a}_{\alpha(i)}}  } . $ The former construction encompasses all arrows that are compositions of \emph{structural morphisms}, \textit{i.e.}, of symmetries, terminal arrows, projections and diagonals.

\subsection{Coend calculus}

Virtually everything that follows is rooted in the notion of \emph{coend}. 

\begin{definition}
Let $ F : C^o \times C \to D $ be a functor. A \emph{cowedge} for $ F $ is an object $  T \in D$ together with a family of morphisms $ w_{c} : F(c, c) \to T $ such that the following diagram commutes 

\[\begin{tikzcd}[column sep= 2 cm, row sep= 2 cm]
 F(c',c) \arrow{r}[name=a]{F(f,1)}   \arrow{d}[name=d]{F(1,f)}   & 
F(c,c)  \arrow{d}[name=e]{w_c }    \\
 F(c',c')  \arrow{r}[name=b]{w_{c'} }                                   &  
T & 
\end{tikzcd} \]

for $ f : c \to c' . $

\end{definition}

A \emph{coend} is then an universal cowedge. We denote the coend of $ F $ as $ \int^{c \in C} F(c,c) .$ Clearly a coend is a kind of colimit, precisely a coequalizer. The integral notation is justified by the formal calculus connected with this notion\footnote{For a proper introduction to coend calculus see \cite{fosco:coend}.}. 



\subsection{Presheaves} 
 
For a small category $A $ define $ PA =[A^o, \Set]  ,$ the category of \emph{presheaves} of $A  $ and natural transformations. If $ A$ is monoidal, for $ X, Y \in PA, $ we define the \emph{Day convolution} tensor product \cite{day:con} pointwise 

\[ (X \hat{\otimes} Y)(a) = \int^{a_1, a_2 \in A} X(a_1) \times Y(a_2) \times A(a, a_1 \otimes a_2)  . \]

It is well-known and crucial that $ PA $ is the \emph{free cocompletion} of $A$. This derives directly from the \emph{Yoneda embedding} and what is called the \emph{density theorem}, \textit{i.e.,} that presheaves are canonical colimits of representables. The freeness condition is then satisfied by the \emph{left Kan extension}: 
  \[\begin{tikzcd}
A \arrow[r, "Y_A"] \arrow[d, "F"'] & PA \arrow[ld, bend left, "{L_{Y}(F)}"]  \\
B                                        &                  
\end{tikzcd}\]
Where $B $ is a cocomplete category, $ Y_A $ is the Yoneda embedding and $F $ functor.

\subsection{Distributors}

We now define the bicategory of \emph{distributors}.
\begin{itemize}
\item 0-cells are small categories $A,B,C \dots $;
\item 1 cells $ F: A\nrightarrow B $ are functors $ F: B^{o} \times A \to \Set.  $ By the cartesian closed structure of the 2-category of categories, functors and natural transformations we have the following correspondence: 

\[ \begin{prooftree}
\hypo{ F: B^{o} \times A \to \Set }
	\infer1[]{ F^{\lambda} : A \to PB }
\end{prooftree} \]
Hence we will switch from one to the other presentation of distributors when convenient. 
\item 2-cells $ \alpha : F \Rightarrow G $ are natural transformations.
\item For fixed 0-cells $A$ and $B$, 1-cells and 2-cells organize themselves as a category $ \Dist(A,B). $ Composition $ \alpha \star \beta  $ in $ \Dist(A,B) $ is called \emph{vertical composition}.
\item For $ A \in \Dist$, the identity $1_{A} : A \nrightarrow A $ is defined as the Yoneda embedding  $ 1_{A}(a, a') = A(a, a'). $ 
\item For 1-cells $ F : A \nrightarrow B $ and $ G : B \nrightarrow C $ the \emph{horizontal composition} is given by

\[ (G \circ F) (c,a) = \int^{b \in B} G(c,b) \times F(b,a).  \]

associative and identities are only up to canonical isomorphism. For this reason $ \Dist $ is a bicategory \cite{borc:cat} .   

\item There is a symmetric monoidal structure on $ \Dist $ given by the cartesian product of categories:
$ A \otimes B = A \times  B .$ The bicategory of distributors is compact closed and orthogonality is given by taking the opposite category $ A^{\bot} = A^{o} . $ The linear exponential object is then defined as $ A \multimap B = A^o \times B . $
\item For $A, B \in \Ob{\Dist} $ there is a \emph{zero distributor} $ \emptyset_{A,B} \in \Dist(A,B)  $ such that for all $ \seq{b, a} \in B \times A , \emptyset_{A,B}(b,a) = \emptyset . $ 
\end{itemize} 

Given a functor $ F : A \to B $ we can define distributors $ \bar{F} : A \todi B, \underline{F} : B \todi A $ as $ \bar{F}(b,a) = B(b, F(a)) $ and $\underline{F}  = B(F(a),b)$\footnote{The two distributors are adjoint 1-cells in the bicategory $\Dist$.}.

\subsection{Pseudomonads and Algebras}\label{resourcemon}

For a proper introduction to two-dimensional monad theory we refer to \cite{kelly:2mon}.

\begin{definition}
Let $ \bic{C}$ be a $2 $-category. A \emph{$2 $-monad} over $\bic{C}$ is a triple $(T, m, e) $ where $T $ is a $2 $-endofunctor on $\bic{C} $ and $ m : T^{2} \to T ,$ $ e : 1 \to T $ are $2 $-natural transformations satisfying the usual monadic commutative diagrams. A \emph{pseudomonad} over $ \bic{ C}$ is the same as a $2$-monad but the commutation of diagrams is only up to coherent isomorphisms.
\end{definition}

Given a 2-monad $ \seq{S : \mathcal{C} \to \mathcal{C}, \eta, \mu} $ we can build the \emph{category of lax algebras of $S$}, $ \lalg{C} $ as follows: 

\begin{itemize}
\item An object of $\lalg{C}$ is given by an object $ A \in \bic{C},$ called the \emph{underlying object},  a 1-cell $ h_A : SA \to A $  called the \emph{structure map} and 2-cells $  \iota_1, \iota_2  $:
\[\begin{tikzcd}
SSA \arrow[r, "\mu_A"] \arrow[d, "h_{SA}"'] & SA \arrow[d, "h_A"] \ar[d]\ar[ld,shorten <>=10pt,Rightarrow, "\iota_1"] \\
SA \arrow[r, "h_A"']                        & A                  
\end{tikzcd} \qquad \begin{tikzcd}
A \arrow[r, "\eta_{SA}"] \arrow[rd, equal, ""{name=a}] & SA \arrow[d, "h_{SA}"]\ar[d]  \\
                                                         & A   & 
                                                         \arrow[Rightarrow, "\iota_2", from=1-2, to=a]
                                                        \end{tikzcd}                    \]
The $ 2 $-cells need to verify 2 additional coherence conditions \cite{luc:lax}. We denote lax algebras by $ \mathbb{A}, \mathbb{B}, \dots $ If the 2-cells $ \iota_1, \iota_2 $ are isos, $ \mathbb{A} $ is called a \emph{pseudoalgebra}. If they are identities, $  \mathbb{A}$ is a \emph{strict algebra}.
 
\item For lax  algebras $ \mathbb{A}, \mathbb{B} $ a 1-cell or morphism $ \varphi : \mathbb{A} \to \mathbb{B} $ is a morphism $ F : A \to B $ together with an invertible $2 $-cell  

\[\begin{tikzcd}[column sep= 2 cm, row sep= 2 cm]
 SA \arrow{r}[name=a]{ SF}   \arrow{d}[name=d]{h_A}   & 
SB  \arrow{d}[name=e]{ h_B }  \arrow[ld, Rightarrow, shorten >= 30pt,shorten <= 30pt, "\zeta"]  \\
 A  \arrow{r}[name=b]{ F }                                   &  
B  & 
\end{tikzcd} \]

required to satisfy two coherence conditions \cite{kelly:2mon}[p.3]. If $ \zeta $ is an isomorphism, then the morphism is called a \emph{pseudomorphism}. If $ \zeta $ is the identity, then the morphism is called a \emph{strict} morphism.
\item The category of lax-algebras can be also equipped with a 2-dimensional structure \cite{kelly:2mon}. 
\end{itemize}
 We denote the 2-categories of \emph{pseudoalgebras} and \emph{strict algebras} as respectively $ \psalg{C} $ and  $ \alg{C} $, in both cases the 1-cell considered are pseudomorphisms. Clearly we have that $ \alg{C} $ is a full 2-subcategory of $ \psalg{C} . $

\subsubsection{Resource Monads}

We present a list of $2 $-monads over $\CAT, $ the $ 2$-category of locally small categories, functors and natural transformations. We follow the spirit of \cite{marsd:quant}.  We call these monads \emph{resource monads.} The intuition is that each of these monadic constructions gives a particular notion of resource management.

We start by giving a canonical presentation of some free monoidal constructions. We assume that the reader is familiar with monoidal categories. We explicitly denote an arbitary (symmetric) monoidal category as $\mathbb{A} = \seq{C, \otimes, 1, \alpha, \lambda, \rho, \sigma}$, where $C $ is its underlying category, $\otimes $ its tensor, $ 1$ its unit, $\alpha $ is the associator, $ \lambda, \rho $ the unitors and $ \sigma$ the symmetry.

\begin{definition}
A \emph{semicartesian monoidal category} is a  symmetric monoidal category $ \mathbb{A} = \seq{C, \otimes, 1, \alpha, \lambda, \rho, \sigma}$ such that the unit is a terminal object. We write then $e_a : a \to 1 $ for the terminal morphism.
\end{definition}

\begin{definition}
A \emph{relevant monoidal category} is a symmetric monoidal category $ \mathbb{A} = \seq{C, \otimes, 1, \alpha, \lambda, \rho, \sigma}$ equipped with a natural transformation $  c_a : a \to a \otimes a   $, called the \emph{diagonal}, which has to satisfy additional coherence conditions.
\end{definition}

A monoidal category that is both semicartesian and relevant is a cartesian category.

\begin{proposition}\label{charRE}
For $A  \in \Cat$ and $ \vec{a}, \vec{b} \in \Of A $ with $ n = l(\vec{a}), m = l(\vec{b}) $ we define  

\[ \Of A^{\ast}(\vec{a}, \vec{b}) =  \sum\limits_{\alpha : [m] \to [n]} \prod\limits_{i \in [m]} A(a_{\alpha(i)}, b_i) \]

for $ \alpha : [m] \to [n] $ being restricted either to general functions, bijections, surjections, injections or identities. The following holds:
\begin{enumerate}
\item If $ \alpha $ is restricted to identities, then $ \Of A^{\ast}(\vec{a}, \vec{b})$ is the homset of the free strict monoidal category on $A. $
\item If $ \alpha $ is restricted to bijections, then $ \Of A^{\ast}(\vec{a}, \vec{b})$ is the homset of the free symmetric strict monoidal category on $A . $
\item If $ \alpha $ is restricted to injections, then $ \Of A^{\ast}(\vec{a}, \vec{b})$ is homset of the free semicartesian strict monoidal category on $A$.
\item  If $ \alpha $ is restricted to surjections, then $ \Of A^{\ast}(\vec{a}, \vec{b})$ is the homset of  free relevant strict monoidal category  on $A$.
\item If $ \alpha $ is a general function then $\Of A^{\ast}(\vec{a}, \vec{b})$ is the homset of the free cartesian monoidal strict category on $A$.
\end{enumerate}
\end{proposition}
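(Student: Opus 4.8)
The plan is to establish, in each of the five cases, the corresponding universal property: that $\Of A^{\ast}$, equipped with the embedding $\iota_A \colon A \to \Of A^{\ast}$ sending an object $a$ to the singleton list $\seq{a}$ and a morphism $f$ to $\seq{\mathrm{id}_{[1]}, f}$, is the free (strict monoidal; symmetric strict monoidal; semicartesian strict monoidal; relevant strict monoidal; cartesian strict monoidal) category on $A$. Since the five arguments share a common skeleton, I would present one uniform scheme and isolate the case-specific ingredients, namely which elementary structural morphisms each class of index functions $\alpha$ contributes.

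\emph{Step 1 (category and monoidal structure).} First I would observe that each of the five classes of index functions — identities, bijections, surjections, injections, arbitrary functions — contains the identities and is closed under composition, so that the formula for $\Of A^{\ast}(\vec{a}, \vec{b})$ together with the composition inherited from $\Of A$ indeed defines a category, a subcategory of $\Of A$ in cases 1--4. I would then transport the cartesian monoidal structure of $\Of A$ (concatenation of lists on objects, pairing on morphisms) to each $\Of A^{\ast}$ and read off the extra structure from the allowed $\alpha$: bijections yield the symmetries — these are exactly the morphisms $\alpha^{\star}$ built above from the reindexing maps $\bar\alpha$, specialized to bijections — surjections yield diagonals $c_{\vec{a}}\colon \vec{a} \to \vec{a} \oplus \vec{a}$, and injections yield the terminal maps $e_{\vec{a}}\colon \vec{a} \to \seq{}$; arbitrary functions yield all of these at once. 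Checking that these data obey the coherence axioms of the relevant flavour of monoidal category is a direct computation with the combinatorics of the maps $\bar\alpha$ and $\alpha^{\star}$, which were tailored for this purpose.

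\emph{Step 2 (the extension).} For a category $\mathbb{B}$ of the appropriate kind and a functor $F\colon A \to \mathbb{B}$, I would define $\bar F \colon \Of A^{\ast} \to \mathbb{B}$ on objects by $\bar F\seq{a_1, \dots, a_n} = Fa_1 \otimes \dots \otimes Fa_n$ (forced by strictness) and on a morphism $\seq{\alpha, \vec{f}}\colon \vec{a} \to \vec{b}$ by first factoring it as the structural morphism determined by $\alpha$ followed by the componentwise morphism $\seq{\mathrm{id}, \vec{f}}$, and setting $\bar F\seq{\alpha, \vec{f}} = (Ff_1 \otimes \dots \otimes Ff_m) \circ \theta^{\mathbb{B}}_\alpha$, where $\theta^{\mathbb{B}}_\alpha$ is the canonical structural morphism in $\mathbb{B}$ induced by $\alpha$ out of the symmetries, diagonals and projections that $\mathbb{B}$ provides. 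One then verifies $\bar F \circ \iota_A = F$, that $\bar F$ is strict monoidal and preserves the relevant structure, and — the crucial point — that $\bar F$ respects composition, which reduces to the identity $\theta^{\mathbb{B}}_{\alpha \circ \beta} = \theta^{\mathbb{B}}_\beta \circ \theta^{\mathbb{B}}_\alpha$ together with naturality of the family $\theta^{\mathbb{B}}$. Uniqueness is then automatic: the value of $\bar F$ on objects is forced by strictness, on morphisms in the image of $\iota_A$ by $\bar F\circ\iota_A = F$, on structural morphisms by structure-preservation, and every morphism of $\Of A^{\ast}$ splits as a structural morphism followed by a componentwise one.

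\emph{The main obstacle} is precisely the well-definedness and functoriality of the assignment $\alpha \mapsto \theta^{\mathbb{B}}_\alpha$: a priori $\theta^{\mathbb{B}}_\alpha$ depends on an expression of $\alpha$ as a composite of elementary maps (transpositions, codiagonals $[2]\to[1]$, injections $[0]\to[1]$, and so on), and one must show the resulting morphism of $\mathbb{B}$ is independent of that expression and compatible with composition of index functions. This is exactly the content of the coherence theorems for the four flavours at stake: Mac Lane's coherence for symmetric monoidal categories, and its refinements to monoidal categories with a coherent diagonal, with coherent terminal-unit projections, and with both (i.e. cartesian categories). I would invoke these coherence results rather than reprove them; granted those, the remaining verifications in Steps 1 and 2 are bookkeeping. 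A lighter alternative, for readers already comfortable with standard presentations of these free categories, is to exhibit directly a bijection between $\Of A^{\ast}(\vec{a}, \vec{b})$ and the homset of the standard model and check that it is monoidal and functorial — but this too ultimately rests on the same coherence facts. Finally, since the morphisms of $\Of A$ are exactly those of the general-function case, case 5 may alternatively be treated first (identifying $\Of A$ itself as the free cartesian strict monoidal category on $A$) and cases 1--4 recovered by restricting to the subcategory whose structural part uses only the corresponding elementary morphisms.
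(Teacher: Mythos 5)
Your proposal is correct and follows the same route as the paper: the paper's own proof is only a two-line sketch observing that each choice of index-function class carves out a subcategory of $\Of A$ and that the unit is the singleton embedding $a \mapsto \seq{a}$, which are precisely your Step 1 and your $\iota_A$. Your elaboration of the freeness verification via the structural morphisms $\theta^{\mathbb{B}}_\alpha$ and the appeal to the coherence theorems supplies exactly the detail the paper leaves implicit.
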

\begin{proof}
The  proof exploits the fact that each $\Of A^{\ast}(\vec{a}, \vec{b})$ defines a subcategory of $ \Of A. $ The unit $ \eta_{A} : A \to \Of A ^{\ast}$ is given by the singleton embedding $  a \mapsto \seq{a}  . $
\end{proof}

The resource monads are then the following 2-monads. 

\begin{enumerate}
\item The \emph{strict monoidal resource monad}: the $2 $\nobreakdash-monad over $\CAT $ that sends a category $A $ to its free strict monoidal completion;
\item The \emph{linear resource monad}: the $2 $-monad over $\CAT $ that sends a category $A $ to its free symmetric strict monoidal completion;
\item The \emph{semicartesian resource  monad}: the $2 $-monad over $\CAT $ that sends a category $A $ to the free semicartesian strict monoidal category on $A $;
\item The \emph{relevant resource monad}: the $2 $-monad over $\CAT $ that sends a category $A $ to the free relevant strict monoidal category on $A $; 
\item The \emph{ cartesian resource monad}:  the $2 $-monad over $\CAT $ that sends a category $A $ to its free cartesian strict monoidal completion. 

\end{enumerate}

For $ S$ resource monad, we call the \emph{tensor product of $S $ }the tensor product on $SA $. We call \emph{$ S$-monoidal functor} a functor that preserves the structure on the nose. We denote as $ \Of^{S}  $ the full subcategory of $ \Of $ where morphisms depend on the structure of $S $ (\textit{via} Proposition \ref{charRE}).

\begin{proposition}
Let $ A, B \in \Cat $ and $S $ be a resource monad. If the tensor product of $ S $ is symmetric, then we have $ S(A \sqcup B) \simeq SA \times SB    .$
\end{proposition}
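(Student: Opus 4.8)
The plan is to construct an explicit equivalence of categories from the concrete description of $S$ given by Proposition~\ref{charRE}. Recall that there $SA = \Of^S A$: objects are lists over $A$, and a morphism $\seq{\alpha, \vec{f}} : \seq{c_1, \dots, c_n} \to \seq{c'_1, \dots, c'_m}$ is a function $\alpha : [m] \to [n]$ in the class singled out by $S$ --- since the tensor of $S$ is assumed symmetric, this class is one of: bijections, injections, surjections, or all functions --- together with morphisms $f_i : c_{\alpha(i)} \to c'_i$. The crucial observation is that $A \sqcup B$ has no morphism relating an object of $A$ to an object of $B$; hence, for lists over $A \sqcup B$, any such $\alpha$ must map a target position labelled in $B$ to a source position labelled in $B$, and likewise for $A$. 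Splitting every list $\vec{c}$ over $A \sqcup B$ into its sublist $\vec{c}_A$ of $A$-labelled entries and its sublist $\vec{c}_B$ of $B$-labelled entries (in their relative order), and splitting each $\alpha$ accordingly into $\alpha_A, \alpha_B$ by restriction and re-indexing, defines a functor $P : S(A \sqcup B) \to SA \times SB$; one checks that $\alpha_A, \alpha_B$ again lie in the class singled out by $S$ (a restriction of a bijection/injection/surjection/function to matched complementary blocks is of the same kind), and that $P$ respects the composition law $\seq{\beta, \vec{g}} \circ \seq{\alpha, \vec{f}} = \seq{\alpha \circ \beta, \vec{g} \circ \vec{f}\{\alpha\}}$. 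Conversely, writing $\iota_A : A \to A \sqcup B$, $\iota_B : B \to A \sqcup B$ for the coproduct injections, define $Q : SA \times SB \to S(A \sqcup B)$ by $Q(\vec{a}, \vec{b}) = S\iota_A(\vec{a}) \cdot S\iota_B(\vec{b})$ (concatenation of lists) and, on morphisms, by the block sum $Q(\seq{\alpha, \vec{f}}, \seq{\beta, \vec{g}}) = \seq{\alpha \oplus \beta, \vec{f} \cdot \vec{g}}$, which is again in the admissible class since the latter is closed under disjoint sums of functions.

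Then I would check that $P$ and $Q$ are quasi-inverse. On the nose $P \circ Q = \mathrm{id}_{SA \times SB}$, since splitting the concatenation of an all-$A$ list and an all-$B$ list returns the two summands unchanged. In the other direction $Q \circ P$ sends $\vec{c}$ to the list obtained by gathering all its $A$-labelled entries at the front (keeping their relative order), followed by all its $B$-labelled entries; there is a canonical ``unshuffle'' bijection $\gamma_{\vec{c}}$ realising this rearrangement, and since by hypothesis all bijections are admissible, $\gamma_{\vec{c}}$ is a genuine isomorphism $\vec{c} \to QP(\vec{c})$ in $S(A \sqcup B)$ --- this is precisely the point at which the non-symmetric case~(1) of Proposition~\ref{charRE} is excluded. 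Naturality of $\gamma$ in $\vec{c}$ says that block sums of morphisms commute with these rearrangements, and follows from the naturality of the symmetry isomorphisms of $\Of^S$. Hence $Q \circ P \cong \mathrm{id}_{S(A \sqcup B)}$, and therefore $S(A \sqcup B) \simeq SA \times SB$.

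The only genuine work is the index bookkeeping: spelling out how $\alpha : [m] \to [n]$ splits into $(\alpha_A, \alpha_B)$ once source and target lists are partitioned into their $A$- and $B$-blocks, checking compatibility with composition and with the reindexing $\alpha \mapsto \bar{\alpha}$, and verifying naturality of the unshuffle --- routine but notation-heavy. Alternatively, one can argue abstractly: the free-$S$-algebra functor, being a left (bi)adjoint, preserves coproducts, so $S(A \sqcup B)$ underlies the coproduct $SA + SB$ formed in the $2$-category of $S$-algebras and $S$-monoidal functors; and because the tensor of $S$ is symmetric, that $2$-category is semiadditive --- its unit is a zero object and finite coproducts coincide with finite products --- whence $SA + SB \simeq SA \times SB$.
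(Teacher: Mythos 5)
The paper states this proposition without any proof at all (it is the standard ``Seely-type'' equivalence, invoked later to establish cartesian closure of $\SCat$), so there is nothing internal to compare your argument against; judged on its own terms, your proof is correct. The concrete construction is the right one: the absence of morphisms between the two summands of $A \sqcup B$ forces every reindexing function $\alpha$ of a morphism in $S(A\sqcup B)=\Of^S(A\sqcup B)$ to respect the $A$/$B$-labelling of positions, the restricted functions $\alpha_A,\alpha_B$ stay in the admissible class in each of the four symmetric cases of Proposition~\ref{charRE} (for surjections this uses exactly the observation that an $A$-labelled source position can only be hit from an $A$-labelled target position), and the unshuffle bijection witnessing $QP \cong \mathrm{id}$ is an isomorphism precisely because bijections are admissible whenever the tensor is symmetric --- which is where the hypothesis enters and why the strict monoidal resource monad is excluded. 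Two minor remarks. First, your alternative abstract argument is the one most readers would expect, but the assertion that the $2$-category of $S$-algebras is semiadditive (finite biproducts, with the unit a zero object) is itself a nontrivial bicategorical fact requiring the same symmetry and some care about strict versus pseudo morphisms of algebras, so it should be cited rather than asserted if used as the actual proof. Second, when checking compatibility with composition, note that the paper's composition law $\seq{\beta,\vec g}\circ\seq{\alpha,\vec f}=\seq{\alpha\circ\beta,\vec g\circ\vec f\{\alpha\}}$ contains a typo (the reindexing should be by $\beta$, since the $j$-th component of the composite is $g_j \circ f_{\beta(j)}$); your bookkeeping should follow the corrected law.
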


We can extend the former proposition to finite products and coproducts of categories
$ S(A_1 \sqcup \dots \sqcup A_n) \simeq SA_1 \times \dots \times SA_n $. We denote the two components of the former equivalence as respectively $ \mu_0 : S(A_1 \sqcup \dots \sqcup A_n) \to SA_1 \times \dots \times SA_n $ and
$ \mu_1 :  SA_1 \times \dots \times SA_n \to   S(A_1 \sqcup \dots \sqcup A_n) .  $

\subsubsection{The $ 2$-monadic Lifting}

In \cite{fiore:rel}, a method to extend $2 $-monads over  $\Cat $ to pseudomonads over $ \Dist $ is introduced. The construction is based on the intuition that the bicategory of distributors is the Kleisli bicategory for a suitable pseudomonad of presheaf on the $2 $-category $\Cat $. Indeed, this idea is very natural: a distributor is just a functor $ F : A \to PB .$ However, this is not strictly possible, since for a small category $A,  $ $PA $ is not small any more. In \cite{fiore:rel} the notion of relative pseudomonad is defined, in order to deal with this problem.

\begin{definition}[Relative pseudomonad]
Let $ J : \bic{C} \to \bic{D} $ be a pseudofunctor between $2 $-categories. A \emph{relative pseudomonad} $ T$ over $ J$ is the collection of the following data: 

\begin{itemize}
\item for $ A \in \bic{C} ,$ an object $TA \in \bic{D}; $
\item for $A, B \in \bic{C}, $ a functor $ (-)^{\ast}_{A,B} : \bic{D}(JA, TB) \to \bic{D}(TA,TB); $
\item for $ A \in \bic{C},$ a morphism $ i_A : JA \to TA; $
\item for $ f : A \to B $ and $g : B \to C $ a family of invertible two-cells $ \mu_{f,g} : (g^{\ast} \circ f )^{\ast}  \cong g^{\ast} \circ f^{\ast}; $
\item for $ f : JA \to TB  $ a family of invertible two cells $ \eta_f : f \cong f^{\ast} \circ i_x ; $
\item a family of invertible two cells $ i_{A}^{\ast} \cong 1_{TA} . $
\end{itemize}
This data has also to satisfy two coherence conditions \cite{fiore:rel}. 
\end{definition}

Relative pseudomonads are equipped with an appropriate notion of Kleisli bicategory \cite[Theorem 4.1]{fiore:rel}. 

Given a relative pseudomonad $ T $ over $ J :  \bic{C} \to \bic{D}  $ and a $2 $-monad $ S$ over $ \bic{C}, $ we can define a notion of \emph{lifting} of $ T$ to pseudoalgebras of $S $ \cite[Definition 6.2]{fiore:rel}. The idea is that the lifting, denoted $ \tilde{T}, $ determines a relative pseudomonad over the lifted pseudofunctor $ \tilde{J} : S $-$Alg_{\bic{C}} \to S$-$PsAlg_{\bic{D}}\footnote{In order to obtain this lifting one has to add the condition that the $2$-monad $ S$ restricts along $ J$ \cite{fiore:rel}.} . $ 

\begin{lemma}[{\cite[Example 4.2]{fiore:rel}}]
Distributors are the Kleisli bicategory for the relative pseudomonad of presheaves $ P $ over the inclusion functor $ j : \Cat \to \CAT . $
\end{lemma}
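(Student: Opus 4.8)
The plan is to exhibit the Kleisli bicategory of the relative pseudomonad $P$ (presheaves) over $j : \Cat \to \CAT$ and to identify it, on the nose or up to biequivalence, with $\Dist$. Recall that, by the definition of Kleisli bicategory for a relative pseudomonad \cite[Theorem 4.1]{fiore:rel}, the $0$-cells are the objects of the source $2$-category $\Cat$, i.e. small categories; the $1$-cells from $A$ to $B$ are the morphisms $JA \to PB$ in the target, which here are functors $A \to PB = [B^o, \Set]$; and $2$-cells are $2$-cells there, i.e. natural transformations. So, first I would observe that this already matches the data of $\Dist$ after the currying isomorphism $\Set^{B^o \times A} \cong [A, [B^o, \Set]]$ recorded in the definition of $\Dist$ above: a $1$-cell $F : A \nrightarrow B$ presented as $F : B^o \times A \to \Set$ corresponds precisely to $F^\lambda : A \to PB$, and this correspondence is functorial in the evident way, so it is an isomorphism of hom-categories $\Dist(A,B) \cong \CAT(jA, PB)$.

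Next I would check that the remaining structure matches. The identity $1$-cell in the Kleisli bicategory is $i_A : jA \to PA$, the unit of the relative pseudomonad, which in the presheaf case is the Yoneda embedding $Y_A$; under the currying isomorphism this is exactly the distributor $1_A(a,a') = A(a,a')$ singled out above, so identities agree. For horizontal composition, the Kleisli composite of $f : jA \to PB$ and $g : jB \to PC$ is $g^\ast \circ f$, where $(-)^\ast : \CAT(jB, PC) \to \CAT(PB, PC)$ is the extension operation of the relative pseudomonad — concretely the left Kan extension along $Y_B$, which by the free-cocompletion universal property of $PB$ (recorded in the Presheaves subsection) is the cocontinuous extension of $g$. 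Spelling this out pointwise and using the coend formula for left Kan extensions, one computes $(g^\ast \circ f)(a)(c) \cong \int^{b \in B} g(b)(c) \times f(a)(b)$, which is exactly the coend defining $(G \circ F)(c,a)$ in the definition of $\Dist$. The associativity and unit coherence isomorphisms $\mu_{f,g}$ and $\eta_f$ of the relative pseudomonad then transport, under currying, to the canonical coherence isomorphisms of $\Dist$; that these are the \emph{same} canonical isomorphisms (not merely some isomorphisms) is what makes this an isomorphism, rather than just a biequivalence, of bicategories.

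The main obstacle is bookkeeping rather than conceptual: one must verify that the relative-pseudomonad extension operation for $P$ genuinely is ``extend by left Kan extension along Yoneda'' and that its structure $2$-cells are the standard ones, i.e. recall the explicit description of $P$ as a relative pseudomonad from \cite{fiore:rel}. This is where a possible size subtlety also lives — $PA$ is locally small but not small, which is exactly why the \emph{relative} (over $j : \Cat \to \CAT$) formulation is used — so one must be careful that all the constructions land in $\CAT$ and that the coends involved exist (they do, being colimits in $\Set$ indexed by small categories). Once the explicit description of $P$ is in hand, the verification is the routine pointwise coend computation sketched above together with a coherence check that the canonical isos match.

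Since this lemma is cited verbatim as \cite[Example 4.2]{fiore:rel}, the honest ``proof'' is essentially a pointer: I would state that the result is \cite[Example 4.2]{fiore:rel}, give the currying correspondence $\Dist(A,B) \cong \CAT(jA,PB)$ and the coend computation identifying Kleisli composition with distributor composition as the content, and leave the coherence diagram chase to the cited reference. So in the write-up I expect no more than a short paragraph: recall the presheaf relative pseudomonad, curry, compute the composite via the left Kan extension coend formula, and note that the structural isomorphisms correspond.
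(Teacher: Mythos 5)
The paper offers no proof of this lemma at all---it is stated purely as a citation to \cite[Example 4.2]{fiore:rel}---and your proposal correctly recognizes this while supplying a faithful sketch of the verification that the cited example carries out (currying of hom-categories, Yoneda as the Kleisli unit, left Kan extension along Yoneda giving the coend formula for composition). Your account is correct and consistent with the paper's treatment; the only minor caveat is that whether the identification is an isomorphism of bicategories or merely a biequivalence depends on how strictly the coherence data is arranged, a point the cited reference handles and which does not affect the use made of the lemma here.
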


\begin{proposition}[{\cite[Theorem 6.3]{fiore:rel}}]\label{LpsA}
Let $S $ be a 2-monad over $ \bic{C} . $  If a relative pseudomonad $ T $ over $ J : \bic{C} \to \bic{D} $ lifts to pseudoalgebras of $S $, then $S $ can be extended to a pseudomonad on $ Kl(T). $ 
\end{proposition}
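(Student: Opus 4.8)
The plan is to transpose, to the relative-pseudomonad setting, Beck's classical correspondence between liftings of a monad $T$ to the algebras of a monad $S$ and extensions of $S$ to the Kleisli construction of $T$. Since we are \emph{given} the lifting $\tilde T$ of $T$ to pseudoalgebras of $S$, the first step is to extract from it the data of a (pseudo) distributive law. Using the hypothesis under which the lifting exists — that $S$ restricts along $J$ — write $S'$ for the $2$-monad on $\bic D$ furnished by this restriction, so that $S'J \cong JS$ compatibly with units and multiplications. The lifting $\tilde T$ then makes each $TA$ the underlying object of an $S'$-pseudoalgebra, pseudonaturally in $A$; and, by applying $\tilde T$ to the free $S$-algebras $(SA,\mu_A)$ and using that the unit $i$, the extension $(-)^\ast$ and the coherence isomorphisms of $\tilde T$ all lie over those of $T$, one obtains a pseudonatural family $\delta_A : S'TA \to TSA$ together with invertible $2$-cells witnessing its compatibility with $i$, with $(-)^\ast$, and with the unit and multiplication of $S'$. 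I would package these as a distributive law of $S$ over $T$ in the relative-pseudomonad sense.

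Next I would define the extension $\bar S$ of $S$ to $Kl(T)$. On $0$-cells, $\bar S A = SA$. On a $1$-cell $f \in Kl(T)(A,B) = \bic D(JA,TB)$, set $\bar S f = \delta_B \circ S'f$, read as a map $JSA \cong S'JA \to S'TB \to TSB$, i.e.\ an element of $Kl(T)(SA,SB)$; and on $2$-cells, apply $S'$. That $\bar S$ is a pseudofunctor $Kl(T) \to Kl(T)$ — in particular that it respects Kleisli composition up to coherent isomorphism — is exactly the first distributive-law axiom, and coherence of these comparison $2$-cells follows from the coherence conditions of $\tilde T$ together with those of $T$. The unit $\bar e : 1_{Kl(T)} \Rightarrow \bar S$ and the multiplication $\bar m : \bar S\bar S \Rightarrow \bar S$ are obtained by pushing the $2$-natural transformations $e$ and $m$ of $S$ through the Kleisli inclusion $\bic C \hookrightarrow Kl(T)$, $(g : A \to B) \mapsto (i_B \circ Jg : JA \to TB)$; their pseudonaturality squares in $Kl(T)$ commute up to invertible $2$-cells assembled from the $\eta$'s of $T$ and from the compatibility of $\delta$ with $e$ and $m$. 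Finally one verifies the two pseudomonad coherence conditions for $(\bar S, \bar m, \bar e)$, which reduce to the (strict) monad identities for $S$, the relative-pseudomonad coherences of $T$, and the distributive-law axioms.

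The main obstacle is the bookkeeping of invertible $2$-cells and the verification of their coherence: every identity that is strict in the classical Beck correspondence is here mediated by a $2$-cell carrying its own coherence condition, so the bulk of the work is a large but essentially mechanical diagram chase showing that the comparison cells built from the lifting data satisfy the pseudomonad axioms and that $\bar S(-)^\ast$ behaves correctly; it helps that $S$ is a \emph{strict} $2$-monad, so all the weakness is confined to $T$. Alternatively — and I would present this route in parallel, as it is more conceptual — one can invoke the universal property of the Kleisli bicategory: form $Kl(\tilde T)$, the Kleisli bicategory of the lifting (whose $0$-cells are $S$-algebras, by Theorem~4.1 of \cite{fiore:rel}), observe that the evident forgetful pseudofunctor $Kl(\tilde T) \to Kl(T)$ together with the free-$S$-algebra assignment exhibits $\bar S$ as a pseudomonad on $Kl(T)$ with $Kl(\bar S) \simeq Kl(\tilde T)$, and thereby transport the already-established structure of $\tilde T$ onto $\bar S$, obtaining its pseudomonad coherence for free.
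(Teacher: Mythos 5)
The paper itself gives no proof of this proposition: it is imported verbatim as \cite[Theorem 6.3]{fiore:rel}, so the only honest comparison is with the argument of Fiore, Gambino, Hyland and Winskel. Measured against that, your reconstruction is essentially correct and follows the same skeleton: the extension acts as $S$ on $0$-cells, and a Kleisli $1$-cell $f : JA \to TB$ is sent to a composite $JSA \cong S'JA \to S'TB \to TSB$, where the second leg is manufactured from the lifting data --- concretely, from the pseudoalgebra structure map of $\tilde T$ applied to the free algebra $(SB,\mu_B)$ together with the extension of $i_{SB}\circ J\eta_B$. The one substantive divergence is your decision to first package this family $\delta_B$ as ``a distributive law of $S$ over $T$ in the relative-pseudomonad sense'' and then run a Beck-type correspondence. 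The cited work deliberately does the opposite: the entire point of introducing liftings to pseudoalgebras is to \emph{avoid} ever axiomatising pseudo-distributive laws, whose coherence conditions (on the order of ten axioms) are the genuinely hard part of the classical story. Working directly from the lifting, one verifies only the two coherence conditions of a relative pseudomonad extension, which is a much smaller diagram chase than the one your first route commits you to. So your plan proves the statement, but at a higher bookkeeping cost than necessary, and the step ``package these as a distributive law'' should either be dropped or be accompanied by a precise list of the axioms you intend to verify.

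Your alternative route via $Kl(\tilde T)$ is attractive but, as stated, slightly circular: you cannot ``exhibit $\bar S$ as a pseudomonad'' via the forgetful pseudofunctor $Kl(\tilde T) \to Kl(T)$ and the free-algebra assignment unless you first establish that these form a biadjunction and that pseudomonads are generated by biadjunctions in the relevant tricategorical sense --- neither of which is free. It is a good conceptual check on the answer, but not an independent proof without that extra work.
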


In our case, $  T$ will be the relative pseudomonads of presheaves $ P,$ $ Kl(P) $ the bicategory of distributors and $S $ an arbitrary resource monad.

\begin{theorem}\label{Lresmon}
Let $ S $ be a resource monad. The relative pseudomonad $ P $ lifts to the pseudoalgebras of $S .$
\end{theorem}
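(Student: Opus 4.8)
The plan is to unfold the definition of a lifting of a relative pseudomonad to the pseudoalgebras of a $2$-monad \cite[Definition 6.2]{fiore:rel} and to verify its clauses with $T = P$ and $S$ an arbitrary resource monad; Proposition \ref{LpsA} then yields the extension of $S$ to $Kl(P) = \Dist$. Concretely, the verification splits into: (i) $S$ restricts along the inclusion $j : \Cat \to \CAT$; (ii) for every small strict $S$-algebra $\mathbb A = \seq{A, h_A}$ one equips $PA$ with an $S$-pseudoalgebra structure $\tilde P\mathbb A$, with coherence $2$-cells $\iota_1,\iota_2$; (iii) the Yoneda embedding $Y_A : A \to PA$ underlies a pseudomorphism $\tilde J\mathbb A \to \tilde P\mathbb A$ realizing the unit $i_A$, and the Kleisli extension $(-)^{\ast}$ sends pseudomorphisms of $S$-pseudoalgebras to pseudomorphisms; (iv) the two coherence conditions relating this data to $\mu_{f,g}$ and $\eta_f$ hold.

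Step (i) is immediate: by the explicit description of homsets in Proposition \ref{charRE}, the underlying category of $SA$ has finite lists over $A$ as objects and tuples of $A$-morphisms indexed by a function of the prescribed class as morphisms, so $SA$ is small whenever $A$ is; hence $S$ restricts to a $2$-monad on $\Cat$, as required. For step (ii) the guiding fact is that $PA$ is the free cocompletion of $A$, so any structure on $A$ presented by finite tensors and structural morphisms lifts canonically along $Y_A$: the tensor of $\mathbb A$ extends to the Day convolution $\hat\otimes$ on $PA$; the unit is $Y_A(1)$, which is terminal in $PA$ exactly when $1$ is terminal in $A$ (the semicartesian flavor); and a diagonal on $A$ extends to a diagonal on $PA$ by left Kan extension along $Y_A$ (the relevant and cartesian flavors). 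Using the uniform presentation of $SA$ from Proposition \ref{charRE}, the structure map $\tilde h_A : S(PA) \to PA$ is defined by evaluating a formal $S$-combination of presheaves through $\hat\otimes$ and the structural maps of $PA$, followed by the left Kan extension of $Y_A \circ h_A$ along $Y_A$; the coherence isomorphisms $\iota_1,\iota_2$ come from the pseudofunctoriality of $P$ and the universal property of Kan extensions.

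For step (iii), $Y_A$ is strong monoidal for $\hat\otimes$, preserves all limits (hence terminal units), and is compatible with diagonals, so it is $S$-monoidal up to coherent isomorphism; this gives the invertible $2$-cell $\zeta$ exhibiting $i_A = Y_A$ as a pseudomorphism. Given a pseudomorphism $f : \tilde J\mathbb A \to \tilde P\mathbb B$, its extension $f^{\ast} : PA \to PB$ is the left Kan extension of $f$ along $Y_A$; since the left Kan extension along a dense, structure-preserving functor of a structure-preserving functor is again structure-preserving, $f^{\ast}$ acquires a canonical invertible $2$-cell making it a pseudomorphism $\tilde P\mathbb A \to \tilde P\mathbb B$, and uniqueness of Kan extensions forces compatibility with $\mu_{f,g}$ and $\eta_f$. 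The two coherence conditions of \cite[Definition 6.2]{fiore:rel} follow by the same density argument: both sides of each axiom are colimit-preserving functors out of $PA$ (or $S(PA)$) that agree on representables, hence are canonically isomorphic.

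I expect the main obstacle to be step (iv), in particular carrying the coherence bookkeeping out \emph{uniformly} over the five resource monads. The symmetric, linear and semicartesian flavors are handled rather directly through Day convolution, whereas the relevant and cartesian flavors require the diagonal on $PA$ to be produced by a left Kan extension and then shown coherent both with $\hat\otimes$ and with the extension operation $(-)^{\ast}$ — this is exactly where the duplication/idempotency phenomena make the $2$-cells least canonical. A second delicate point is that $P$ is only a \emph{relative} pseudomonad, leaving $\Cat$, so all the ``distributive law'' reasoning must be phrased in terms of the relative Kleisli extension $(-)^{\ast}$ and of pseudomorphisms of pseudoalgebras rather than an honest $2$-natural transformation $PS \Rightarrow SP$; keeping the induced coherence $2$-cells aligned with the relative pseudomonad structure of $P$ is the part demanding the most care.
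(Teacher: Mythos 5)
Your proposal is correct and follows essentially the same route as the paper: the paper likewise reduces to the cases already treated in \cite{fiore:rel}, and for the remaining (relevant and cartesian) flavours checks exactly your three substantive conditions — the $S$-monoidal structure lifts to $PA$ via Day convolution, the Yoneda embedding preserves it, and cocontinuous extensions preserve it — all verified by the density theorem, i.e.\ that presheaves are canonical colimits of representables. Your additional attention to the smallness restriction of $S$ along $j$ and to the relative-pseudomonad coherence bookkeeping only makes explicit what the paper leaves implicit.
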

\begin{proof} 
For the monoidal strict monad, the symmetric monoidal strict monad and the semi-cartesian strict monad, the result was already proved in \cite{fiore:rel}. The cartesian resource monad is actually a direct corollary of the lifting of the $2 $-monad for finite products, again proved in \cite{fiore:rel}. All cases derives from a straightforward adaptation of Kelly's result on the universal property of the Day convolution \cite{kelly:day}. We prove the result for the relevant resource monad and the cartesian resource monad. In order to prove the case of symmetric strict monoidal categories with diagonals, we need to check three conditions: 
\begin{enumerate}
\item The considered monoidal structure lifts to presheaves. 
\item The Yoneda embedding preserves the considered structure. 
\item Let $ \mathbb{A} $ be $ S$-monoidal category, $ \mathbb{B} $ be a $S$-monoidally cocomplete category and $F : \mathbb{A} \to \mathbb{B} $ be a strong monoidal $S $-functor. Cocontinous functors $ F : P \mathbb{A} \to \mathbb{B} $ preserve the relevant structure.
\end{enumerate}
The three condition are verified exploiting the fact that a presheaf is a canonical colimit of representables.   
\end{proof}

\subsection{The Bicategory $\SCat $}\label{scat}
\begin{figure*}[t] 
\begin{subfigure}{9.5cm}
\centering
Types:
\[   \ty :=  o \in \Ob{A} \mid \seq{\ty_1, \dots, \ty_k} \Rightarrow \ty  \]

 Morphisms: 
\begin{gather*}
  \begin{prooftree}
\hypo{f  \in A(o,o')}
\infer1[]{ f : o \to o'}
\end{prooftree} \qquad \begin{prooftree}
\hypo{ \seq{\alpha, \vec{f}} : \tyl'\to \tyl}
\hypo{ f : \ty \to \ty'}
\infer2[]{ \seq{\alpha, \vec{f}} \Rightarrow f : (\tyl \Rightarrow \ty) \to (\tyl' \Rightarrow \ty')} \end{prooftree}     \\
\begin{prooftree}
	\hypo{\alpha \in \Of^S ([k], [k']) \quad f_{1} :  \ty_{\alpha(1)}\to \ty'_{1}\quad\cdots\quad f_{k} :  \ty_{\alpha(k)}\to \ty'_{k}}
	\infer1[]{ \seq{\alpha, f_{1},\dotsc, f_{k}} : \seq{\ty_{1},\dotsc, \ty_{k'}} \to \seq{\ty'_{1},\dotsc, \ty'_{k}}   }
\end{prooftree}
\end{gather*} 
 \caption{Category of Types $D_A $.} \label{IntD1} 
\end{subfigure} \begin{subfigure}{10cm}
\centering
Derivations:
\vspace{0.3cm}
\begin{gather*}
\begin{prooftree}
\hypo{f_1 : \vec{a}_1 \to \seq{}, \dots,  f : \tyl_i \to \seq{\ty}, \dots, f_{n} : \tyl_{n} \to \seq{} }
	\infer1[]{ x_1 :\tyl_1, \dots, x_i : \tyl_i, \dots x_n : \tyl_n \vdash x_i : \ty }
\end{prooftree}  \\  \begin{prooftree}
\hypo{ \Delta, x: \tyl \vdash M : \ty}
	\infer1[]{ \Delta \vdash \lambda x. M : \tyl \Rightarrow \ty }
\end{prooftree} \\
\begin{prooftree}
\hypo{ \Gamma_{0} \vdash M : \seq{a_1, \dots, a_k} \Rightarrow \ty }\hypo{(\Gamma_{i} \vdash N : \ty_i)_{i=1}^{k}  }
\hypo{ \morpCone :  \Delta \to  \bigotimes_{i=0}^{k} \Gamma_i }
	\infer3[]{ \Delta \vdash \rappl{M}{N} : \ty }
\end{prooftree}
\end{gather*}
\vspace{0.3cm}
\caption{Parametric Intersection Type System $E^{S}_A .$} \label{IntD2}
\end{subfigure}
\caption{Type Theoretic Presentation of the Semantics.}
\hrulefill
\end{figure*}

   \begin{figure*}[t]
\begin{gather*}
\qquad \llbracket x \rrbracket_{\vec{x}}(\Delta, a) = SD^{n}(\Delta, \seq{\seq{}, \dots, \seq{\ty}, \dots, \seq{}})\qquad
\qquad  \llbracket \lambda x. M \rrbracket_{\vec{x}} ( \Delta, a) = \begin{cases} \llbracket M \rrbracket_{\vec{x} \oplus \seq{x}}(\Delta \oplus \seq{ \vec{a}'}, a') & \text{ if  } a = \iota(\vec{a}', a') \\ \emptyset & \text{ otherwise.}  \end{cases}  \end{gather*}
 \[ \llbracket MN \rrbracket_{\vec{x}}( \Delta, a ) = \int^{\vec{a} = \seq{a_{1},\dots, a_{k}}\in SD} \int^{\Gamma_{0}, \dots, \Gamma_{k} \in SD^{n}}  \llbracket M \rrbracket_{\vec{x}}(\Gamma_{0},\iota(\vec{a},a)) \times \prod\limits_{i=1}^{k} \llbracket N \rrbracket_{\vec{x}}(\Gamma_{i}, a_i) \times S D^{n}( \Delta, \bigotimes_{i = 0}^{k}\Gamma_{i}) \]
\caption{Denotation of $ \lambda $-terms.}
\hrulefill
\label{denL}
\end{figure*}
From now on we restrict ourselves to resource monad that have a symmetric tensor product\footnote{This is crucial since the Seely equivalence is needed in order to establish the cartesian closure.}. Thanks to Theorem \ref{Lresmon} and Proposition \ref{LpsA}, given a resource monad $S ,$ we obtain a (relative) pseudomonad $ \tilde{S} $ over distributors.
 We denote as $\SDist $ the Kleisli bicategory for this pseudomonad. We define the bicategory of $ S$-\emph{categorical symmetric sequences}, as $\SCat = \SDist^{op}  .$  It is useful to give an explicit definition of the relevant structure of  $ \SCat . $ When we write $ SA^o $ (resp. $SA^n $) we always mean $ (SA)^o  $(resp $(SA)^n $).
\begin{enumerate}
  \item $ \Ob{\SCat} = \Ob{\Cat} . $
  \item For $A, B \in \SDist $, we have $\SCat(A,B) = \SDist (B,A) = \Dist(B,SA) . $ 
  \item The identity is defined as \[  1_{A}(\vec{a},a) = SA(\vec{a},\seq{a}) .\]  
 \item For $ F : A \rightsquigarrow B $ and $ G : B \rightsquigarrow C $ $S $-categorical symmetric sequences, composition is given by considering $ F$ and $G $ as $S$-distributors:
\[ (G \circ F)(\vec{a}, c) = \int^{\vec{b} \in SB} G (\vec{b}, c) \times F^{\flat}(\vec{a}, \vec{b})        \]
where 
\[   F^{\flat}(\vec{a}, \vec{b}) = \int^{\tyl_1, \dots, \tyl_{\length{\vec b}}}      \prod_{i=1}^{\length{\vec{b}}} F(\tyl_i, b_i) \times SA(\bigoplus_{i=1}^{\length{\vec{b}}} \tyl_i, \tyl ) .\]
\item $ \SCat $ is cartesian. The cartesian product is the disjoint union $ A \with B = A \sqcup B $ and the projections are defined as follows:
\[ \pi_{i,2}(\vec{c},a) = S(A \sqcup B)(\vec{c}, \seq{\iota_{i} (a)})  .    \] 
The terminal object is the empty category.
\item The bicategory $ \SCat $ is cartesian closed, with exponential object $A \Rightarrow B = SA^o \times B . $
\end{enumerate}   

Indeed, exploiting the Seely equivalence we get the following chain of equivalences: \[    \SCat(A \with B, C)  =         \Dist(C, S(A \sqcup B)) =    \]\[         \CAT( S(A \sqcup B)^o \times C, \Set)  \simeq  \CAT(SA^o \times( SB^o  \times C), \Set)  =        \]
\[ \SDist( SB^o  \times C, A)     =             \SCat(A, SB^o  \times C)                            . \]

\section{ Models for pure $ \lambda $-calculus}\label{sem}

 We build a family of non-extensional bicategorical models for pure $\lambda$-calculus. These models will then be syntactically presented as appropriate categories of intersection types.

\begin{definition}
Let $A $ be a small category. We define by induction a family of small categories as follows:
\begin{gather*}
D_{0}= A \qquad 
D_{n+1} = ( S D^{o}_{n} \times D_n) \sqcup A
\end{gather*}

We define by induction on $ n \in \mathbb{N} $ a sequence of inclusions $ \iota_{n} : D_{n} \hookrightarrow D_{n+1}$: 
\begin{gather*}
\iota_{0} = \iota_A \qquad
\iota_{n+1} = (S(\iota_{n})^{o} \times \iota_n ) \sqcup 1_{A}
\end{gather*}
Then we set $ D_A =  \colim\limits_{ n \in \mathbb{N}  } D_{n}.$ 
\end{definition}
\begin{figure*}[t]	\scalebox{0.9}{\parbox{1.05\linewidth}{
\begin{align*} 
 \left(\begin{prooftree}
\hypo{f_1 : \vec{a}_1 \to \seq{}, \dots,  f_i : \tyl_i \to \seq{\ty}, \dots, f_{n} : \tyl_{n} \to \seq{} }
	\infer1[]{ x_1 :\tyl_1, \dots, x_i : \tyl_i, \dots x_n : \tyl_n \vdash x_i : \ty }
\end{prooftree} \right)\{ \eta \} \quad &= \quad \begin{prooftree}
\hypo{f_1 \circ g_1 : \vec{b}_1 \to \seq{}, \dots,  f_i \circ g_i : \vec{b}_i \to \seq{\ty}, \dots, f_{n} \circ g_n : \vec{b}_{n} \to \seq{} }
	\infer1[]{ x_1 :\vec{b}_1, \dots, x_i : \vec{b}_i, \dots x_n : \vec{b}_n \vdash x_i : \ty }
\end{prooftree}  \\ 
 \qquad \qquad \qquad  \left(\begin{prooftree}
\hypo{\pi}\ellipsis{}{ \Delta, x: \tyl \vdash \Tm : \ty }
	\infer1[]{ \Delta \vdash \lambda x. \Tm : \tyl \Rightarrow \ty }
\end{prooftree}\right)\{ \eta \} \quad &= \quad  \begin{prooftree} 
\hypo{\pi\{ \eta \oplus \seq{1}\}}\ellipsis{}{ \Delta', x: \tyl \vdash \Tm : \ty 
 }
	\infer1[]{ \Delta' \vdash \lambda x. \Tm : \tyl \Rightarrow \ty }
\end{prooftree} \\  
\left(\begin{prooftree}
\hypo{\pi_1}\ellipsis{}{\Gamma_{1} \vdash \Tm \colon \tyl \Rightarrow \ty }\hypo{\pi_i }
\ellipsis{}{\Gamma_{i} \vdash \Tm : \ty_i}
\delims{\left(}{\right)_{i =1}^{k}}
\hypo{ \theta : \Delta \to \bigotimes_{j=0}^{k} \Gamma_j }
	\infer3[]{ \Delta \vdash \appl{\Tm}{\Tmtwo} \colon \ty}\end{prooftree}\right)\{ \eta  \} \quad
		 &=  \quad
	\begin{prooftree}
\hypo{\pi_1}\ellipsis{}{\Gamma_{1} \vdash \Tm \colon \tyl \Rightarrow \ty }\hypo{\pi_i }
\ellipsis{}{\Gamma_{i} \vdash \Tm : \ty_i}
\delims{\left(}{\right)_{i =1}^{k}}\hypo{  \theta \circ \eta : \Delta' \to  \bigotimes_{j=0}^{k} \Gamma_j }
	\infer3[]{ \Delta' \vdash \appl{\Tm}{\Tmtwo} : \ty}
\end{prooftree} 
\end{align*} }}

Where $ \tyl = \seqdots{\ty}{1}{k} $ and $ \eta = \seqdots{g}{1}{n} : \Delta' \to \Delta . $
\caption{Right action on derivations.} 
\hrulefill
\label{la}
\end{figure*}
The category $D_A $ is the filtered colimit for the diagram $( D_{n} \hookrightarrow D_{n+1} )_{n \in \mathbb{N}} .$ 

This definition is actually a special case of the standard free-algebra construction for an (unpointed) endofunctor \cite{kelly:transc}. In our case the endofunctor is  $ S(-)^{o} \times (-) : \Cat \to \Cat $ and the free algebra is $ \seq{ D_A ,\iota : SD_{A}^o \times D_A \to D_A },$ where $ \iota $ is a canonical embedding. This determines a retraction $ {D_A \Rightarrow D_A} \lhd {D_A} $ in the bicategory $ \SCat. $ The reatraction pair is given by \begin{gather*}
\begin{aligned}
i :  (S (S D_{A}^{o} \times D_{A}))^o \times D_{A} \to \Set
\\ \seq{\vec{d}, a } \mapsto S D_{A}(S ( \iota)(\vec d), \seq{a})
\end{aligned}
\\ \\
 j : SD_{A}^o \times ( S D_{A}^{o} \times D_A ) \to \Set
\\ \qquad
\seq{\vec{a}', \seq{\vec{a},a}} \mapsto S D_{A} (\vec{a}', \seq{\iota(\vec{a},a)})
\end{gather*}
Hence, $ D_A $ is a (weak) reflexive object\footnote{Weak in this case means that the retraction condition is satisfied only up to canonical invertible 2-cell.}. If we set $ \seq{a_1, \dots, a_k} \Rightarrow a ::= \iota(\seq{a_1, \dots a_k},a), $ we can give a completely type-theoretic presentation of the category $D_A $ as in Figure \ref{IntD1}.

We fix a countable set of variables $ x, y, z, \dots \in   \mathcal{V} . $ The set of \emph{$ \lambda $-terms} is defined by induction  in the usual way:
\[    M, N \in \Lambda ::=       x  \mid \la{x} M \mid MN .\]
Terms are considered up to renaming of bound variables. As usual, we assume that application associates to the left. We denote the capture-free parallel substitution of variables as $\subst{M}{x_1, \dots, x_n}{N_1, \dots, N_n} .$ Given a term $M$, a list of terms $ \vec{N} = \seqdots{N}{1}{n}$ and a list of variables $ \vec{x} = \seqdots{x}{1}{m} $ we set $ M \vec{N} = MN_1 \dots N_n , \la{\vec{x}} M = \la{x_1} \dots \la{x_m} M.   $

The \emph{interpretation} of $ \lambda $-terms in the bicategory $\SCat $ is given by induction, following the standard categorical definition (\cite[Section 4.6]{ama:dom}). We fix a small category $A $ and a constant type $D $  such that $ D = D \Rightarrow D .\footnote{It is worth noting that we do not require for this equation to be semantically satisfied, \textit{i.e.} we consider non-extensional models.  } $
\begin{enumerate}
 \item On types: 
 \[ \sem{}{D} = D_A \qquad \sem{}{ \Gamma = D, \dots, D } =    \overbrace{ \sem{}{D} \with \dots \with \sem{} {D} }^\text{ n \text{ times } }    \] 
 \item On terms: 
 \[ \llbracket x_{1} : D, \dots, x_{n} : D \vdash x_{i} : D \rrbracket = \pi_{i,n} \] \[ \llbracket \Gamma \vdash \lambda x. M : D \rrbracket = i \circ \lambda(\llbracket \Gamma, x: D \vdash  M : D \rrbracket) \]
 \[\llbracket \Gamma \vdash PQ : D \rrbracket = ev_{D,D} \circ  \langle j \circ \llbracket \Gamma \vdash P : D\rrbracket,  \llbracket \Gamma \vdash Q : D \rrbracket \rangle . \]
  \end{enumerate} 
 Where $ \seq{i,j} $ is an appropriate retraction pair. Given $ \Gamma = x_1 : D, \dots, x_n :D $ we set $ supp(\Gamma) = \seq{ x_1, \dots, x_n } . $

\subsubsection{Denotations of Terms}\label{intdist}
The category $ D_A $ is a non-extensional model for pure $ \lambda $-calculus. We will denote, with a small abuse of language, $ SD_A $ as $SD $ and $D_A $ as $D .$   We now want to make explicit the idea that the semantics induced by this category is an intersection type system. In order to do so, we are going to define a parallel semantics, that we call the \emph{denotation} of a $ \lambda$-term. The intuition is that the denotation is the type-theoretic presentation, up to isomorphism, of the categorical semantics.

  We call \emph{ intersection type contexts}, or contexts for short, the objects of $ SD^n . $ Since $SD $ is monoidal,  the category $  SD^n  $ admits a tensor product, that we denote as $ \otimes, $ defined as follows: for $ \Gamma = \seqdots{\tyl}{1}{n} , \Delta = \seqdots{\vec{b}}{1}{n}$ we set $ \Gamma \otimes \Delta = \seq{\tyl_1 \oplus \vec{b}_1, \dots, \tyl_n \oplus \vec{b}_n} $. This tensor product inherits all the structure from $ \oplus , $ \textit{i.e.}, if $ \oplus  $ is symmetric (resp., semicartesian, relevant, cartesian) then also $ \otimes $ is so. 

We define the \emph{denotation} of a $ \lambda $-term by induction in Figure \nolinebreak\ref{denL}. We have that $ \sem{\vec{x}}{M} : D \todi SD^{n} . $ 

The denotation of an application is defined \textit{via} the Day convolution as follows. Consider the functor
\[        F : SD^{o} \times SD \times (SD ^n)^o \times D \to \Set                                     \]
\[     \seq{\tyl, \vec{b} = \seqdots{b}{1}{k}, \Delta, \ty} \mapsto        \]
\[  \left( {\sem{\vec{x}}{M} (-, \tyl \Rightarrow \ty)} \ \hat{\otimes} \  { \widehat{\bigotimes}_{i=1}^{k}   \sem{\vec{x}}{N}(-, b_i)  } \right) (\Delta)  \]
 the denotation of an application is then the following coend:
\[  { \sem{\vec{x}}{MN}(\Delta, \ty)} = \int^{\vec{a} \in SD} F(\vec{a}, \vec{a}, \Delta, a)  \]
 The action on morphism is given by the universal property of the coend construction.

The denotation of a term is isomorphic to its bicategorical interpretation \textit{via} the Seely equivalence:

\begin{theorem}\label{semden}
Let $ M \in \Lambda $, $ \vec{x} \supseteq FV(M)$ and $ \Gamma \vdash M : D $ such that $ supp(\Gamma) = \vec{x} $. We have a natural isomorphism \[ \llbracket M \rrbracket_{\vec{x}} \cong  {\overline{\mu}_1 \circ_{\Dist}  \llbracket \Gamma \vdash  M : D \rrbracket }. \]
\end{theorem}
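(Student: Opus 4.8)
The plan is to prove the isomorphism by structural induction on $M$, unfolding the categorical interpretation $\llbracket M\rrbracket_{\vec{x}}$ into explicit coends and matching it, clause by clause, with the denotation of Figure~\ref{denL}. The two sides live a priori in different hom-categories: $\llbracket \Gamma \vdash M : D\rrbracket$ is a $1$-cell of $\SCat$, \ie a distributor $D \todi S(\sem{}{\Gamma})$ with $\sem{}{\Gamma} = D^{\sqcup n}$, whereas $\sem{\vec{x}}{M} : D \todi SD^{n}$; post-composition with $\overline{\mu}_1$ is exactly the re-indexing along the Seely equivalence $S(D^{\sqcup n}) \simeq (SD)^n$ — the very equivalence used to equip $\SCat$ with its cartesian closed structure in Section~\ref{scat} — contributing a representable factor $S(D^{\sqcup n})(\mu_1\Delta, -)$. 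So the proof reduces to: (a) writing down explicit coend formulas for the projections, the pairing, the currying and the evaluation of $\SCat$, as well as for the retraction pair $\seq{i,j}$ of the reflexive object $D_A$; (b) transporting these along $\overline{\mu}_1$; and (c) simplifying by the co-Yoneda (density) lemma and by Fubini for coends. The resulting isomorphism is built out of canonical coherence $2$-cells only, hence is automatically natural in $\Delta$ and $a$ and, by the relevant universal properties, compatible with the actions on derivations (\eg the one of Figure~\ref{la}).

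\emph{Variable case.} Since $\llbracket \vec{x} \vdash x_i : D\rrbracket = \pi_{i,n}$ with $\pi_{i,n}(\gamma, a) = S(D^{\sqcup n})(\gamma, \seq{\iota_i(a)})$, composing with $\overline{\mu}_1$ gives $\int^{\gamma} S(D^{\sqcup n})(\gamma, \seq{\iota_i(a)}) \times S(D^{\sqcup n})(\mu_1\Delta, \gamma)$, which collapses by co-Yoneda to $S(D^{\sqcup n})(\mu_1\Delta, \seq{\iota_i(a)})$; the Seely equivalence identifies this with $SD^{n}(\Delta, \seq{\seq{}, \dots, \seq{a}, \dots, \seq{}})$, \ie with $\llbracket x\rrbracket_{\vec{x}}(\Delta, a)$.

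\emph{Abstraction case.} Unfold $\llbracket \Gamma \vdash \la{x} M : D\rrbracket = i \circ \lambda(\llbracket \Gamma, x : D \vdash M : D\rrbracket)$: currying in $\SCat$ transports the last $D$-component of the context into the exponent $SD^o \times D$; post-composition with the distributor $i$, whose formula $\seq{\vec{d}, a} \mapsto SD(S(\iota)(\vec{d}), \seq{a})$ forces $a$ to be of the shape $\iota(\vec{a}', a')$; and post-composition with $\overline{\mu}_1$ then uses that the Seely iso at $n+1$ arguments factors through the one at $n$ arguments together with the splitting-off of the final component. Inserting the induction hypothesis for $M$ over the context $\vec{x} \oplus \seq{x}$ and contracting the auxiliary coends by co-Yoneda yields exactly the case-split clause: $\llbracket \la{x} M\rrbracket_{\vec{x}}(\Delta, a)$ equals $\llbracket M\rrbracket_{\vec{x} \oplus \seq{x}}(\Delta \oplus \seq{\vec{a}'}, a')$ when $a = \iota(\vec{a}', a')$, and $\emptyset$ otherwise.

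\emph{Application case --- the main obstacle.} Here $\llbracket \Gamma \vdash PQ : D\rrbracket = \ev_{D,D} \circ \seq{\,j \circ \llbracket \Gamma \vdash P : D\rrbracket,\ \llbracket \Gamma \vdash Q : D\rrbracket\,}$, and I would expand each piece separately: the evaluation $\ev_{D,D}$ as its defining coend; the pairing into $\SCat(-, A \sqcup B)$ which, through the Seely equivalence and the $(-)^{\flat}$ operation of the $\SCat$-composition law, produces the glue factor $SD^{n}(\Delta, \bigotimes_{i=0}^{k}\Gamma_i)$ contracting the context of $P$ with those of the $k$ copies of $Q$; and the distributor $j$, whose formula $\seq{\vec{a}', \seq{\vec{a}, a}} \mapsto SD(\vec{a}', \seq{\iota(\vec{a}, a)})$ turns $\llbracket P\rrbracket$ evaluated at an arrow type into the factor $\llbracket P\rrbracket(\Gamma_0, \iota(\vec{a}, a))$. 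After inserting the induction hypothesis for $P$ and $Q$, one is left with a nest of coends: over the argument list $\vec{a} \in SD$, over the contexts $\Gamma_0, \dots, \Gamma_k \in SD^{n}$, and over several auxiliary objects contributed by $j$, by $\ev_{D,D}$ and by $\overline{\mu}_1$. Fubini permutes them, repeated co-Yoneda eliminates the auxiliary ones, and the identification of the Day convolution $\hat{\otimes}$ on $P(SD^{n})$ with the tensor $\otimes$ on $SD^{n}$ inherited from $\oplus$ delivers precisely the formula of Figure~\ref{denL}. The delicate part, where essentially all the real work sits, is the bookkeeping of the structural morphisms of $S$: the Seely equivalence $S(A \sqcup B) \simeq SA \times SB$ — available only because we restricted to symmetric tensors — must be threaded consistently through $(-)^{\flat}$ and through $\overline{\mu}_1$, and one must check that the symmetries it introduces match those implicit in the definition of $F^{\flat}$. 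This is routine coend calculus, but the combinatorial care it requires is the crux of the proof.
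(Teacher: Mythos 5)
Your proposal follows exactly the paper's (much terser) proof: induction on the structure of $M$, unfolding the interpretation into coends and simplifying by co-Yoneda, Fubini and the Seely equivalence until the clauses of Figure~\ref{denL} appear. The clause-by-clause details you supply (the representable factor contributed by $\overline{\mu}_1$, the role of the retraction pair $\seq{i,j}$, the $(-)^{\flat}$ bookkeeping in the application case) are consistent with the definitions in Sections~\ref{scat} and~\ref{sem}, so this is a faithful elaboration of the argument the paper leaves implicit.
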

\begin{proof}
By induction on the structure of $M $, \textit{via} lengthy but straightforward coend manipulations. 
\end{proof}

\subsection{The Denotation as an Intersection Type System} \label{denint}
\begin{figure*}[t] 	\scalebox{0.9}{\parbox{1.05\linewidth}{
\begin{align*} 
 [g : a \to b]\left(\begin{prooftree}
\hypo{f_1 : \vec{a}_1 \to \seq{}, \dots,  f_i = \seq{\alpha, f} : \tyl_i \to \seq{\ty}, \dots, f_{n} : \tyl_{n} \to \seq{} }
	\infer1[]{ x_1 :\tyl_1, \dots, x_i : \tyl_i, \dots x_n : \tyl_n \vdash x_i : \ty }
\end{prooftree} \right)  &= 
\begin{prooftree}
\hypo{f_1 : \vec{a}_1 \to \seq{}, \dots,  \seq{g} \circ f_i = \seq{\alpha, g \circ f} : \tyl_i \to \seq{b}, \dots, f_{n} : \tyl_{n} \to \seq{} }
	\infer1[]{ x_1 :\tyl_1, \dots, x_i : \tyl_i, \dots x_n : \tyl_n \vdash x_i : b}
\end{prooftree}  \\ 
 [\seq{\alpha, \vec{g}} \Rightarrow g : \vec{a} \Rightarrow a \to \vec{b} \Rightarrow b] \left(\begin{prooftree}
\hypo{\pi}\ellipsis{}{ \Delta, x: \tyl \vdash \Tm : \ty }
	\infer1[]{ \Delta \vdash \lambda x. \Tm : \tyl \Rightarrow \ty }
\end{prooftree} \right)  &=   \begin{prooftree} 
\hypo{([g] \pi) \{\seq{1, \seq{\alpha,\vec{g}} } \} }\ellipsis{}{ \Delta, x: \vec{b} \vdash \Tm : b }
	\infer1[]{ \Delta \vdash \lambda x. \Tm : \vec{b} \Rightarrow b }
\end{prooftree}   \\ 
[g : a \to b]\left(\begin{prooftree}
\hypo{\pi_0}\ellipsis{}{\Gamma_{0} \vdash \Tm \colon \tyl \Rightarrow \ty }\hypo{\pi_i }
\ellipsis{}{\Gamma_{i} \vdash \Tm : \ty_i}
\delims{\left(}{\right)_{i =1}^{k}}
\hypo{ \morpCone : \Delta \to \bigotimes_{0}^{k} \Gamma_j }
	\infer3[]{ \Delta \vdash \appl{\Tm}{\Tmtwo} \colon \ty}\end{prooftree}\right)
 &= 
	\begin{prooftree}
\hypo{[1 \Rightarrow g] \pi_0}\ellipsis{}{  \Gamma_{0} \vdash \Tm : \tyl \Rightarrow b  }\hypo{\pi_i }
\ellipsis{}{\Gamma_{i} \vdash \Tm : \ty_i}
\delims{\left(}{\right)_{i =1}^{k}}
\hypo{ \morpCone : \Delta \to  \bigotimes_{j= 0}^{k} \Gamma_j    }
	\infer3[]{ \Delta \vdash \appl{\Tm}{\Tmtwo}: b }
\end{prooftree}
\end{align*}}}

Where $ \tyl = \seqdots{\ty}{1}{k} . $
\caption{Left action on derivations.} 
\hrulefill
\label{ra}
\end{figure*}

We now give a type-theoretic description of the denotation of a $\lambda $-term. We define the intersection type system $E^{S}_A $, where types and morphisms live in the category $D_A$ (Figure \ref{IntD1}). Thanks to this type theoretic description, we can present the denotation's action on morphism as right and left actions on typing derivations: 
\[\begin{prooftree}
\hypo{ \pi }
\ellipsis{}{ \Delta &\vdash M : a} 
\end{prooftree} 
\qquad \rightsquigarrow \qquad
\begin{prooftree}
\hypo{
 ([f] \ \pi \ ) \{\morpCone\} }
\ellipsis{}{ \Delta' &\vdash M : a' } 
\end{prooftree}\]
with $ f : a \to a' $ and $ \morpCone : \Delta' \to \Delta . $ The actions are inductively defined in Figures \ref{la} and \ref{ra}. By an easy inspection of the definitions, we get $ (\pi \{ \eta \}) \{ \theta \} = \pi \{ \eta \circ \theta \} , ([f] \pi ) \{ \eta \} = [f]( \pi \{ \eta \} )  $ and $  [g] ([f] \pi) = [g \circ f] \pi .  $

We observe that in the variable rule of our system (Figure \ref{IntD2}) the morphisms $ f_j : \tyl_j \to \seq{} $ for $ j \neq i \in [n], $ are unique, by the structure of resource monads. In particular, if $S $ is \emph{irrelevant} (cartesian or semicartesian resource monad), $f_j $ is the the terminal morphism $ \top_{\vec{a}_j} : \vec{a}_j \to \seq{} .$ Otherwise (linear or relevant resource monad) $ f_j $ is the identity $ 1_{\seq{}} : \seq{} \to \seq{} .$

\subsubsection{Congruence on Typing Derivations}

The definition of denotation of an application $ MN $ depends on the notion of coend. In the $ \Set $ enriched setting this notion boils down to an appropriate quotient sum of sets. Hence, if we want to give a syntactic presentation of the denotation \textit{via} the intersection type system $E^{S}_A ,$ we shall need to translate the quotient in the setting of typing derivations. 

We set $ \tilde{\pi} $ as the equivalence class of $ \pi $ for the smallest congruence generated by the rules of Figure \ref{congr}. By an easy inspection of the definitions, we get that if $ \pi \sim \pi' $ then $ \pi \{ \theta \} \sim \pi \{ \theta \} $ and $ [g] \pi \sim [g] \pi' . $

  \begin{definition} Let $ \vec{x} \supseteq \fv{\Tm} $ and $ \length{\vec{x}} = n . $ We now define the \emph{S-intersection type distributor of \(M\)}, $ T_{D}(M)_{\vec{x}} : D \todi S D^n, $ as follows:
\begin{enumerate}
\item on objects \[  T_{D}(M)_{\vec{x}}(\Delta, a) = \left \{  \begin{prooftree}
\hypo{ \tilde{ \pi }  }
\ellipsis{}{ \Delta &\vdash M : a} \end{prooftree} \right\}  \]  
\item on morphisms \[ T_{D}(M)_{\vec{x}}(f, \morpCone) : T_{D}(M)_{\vec{x}}(\Delta, a) \to  T_{D}(M)_{\vec{x}}( \Delta', a')\]
\[         \tilde{\pi}   \mapsto \widetilde{  [f]  {\pi} \{ \morpCone \} } \]
\end{enumerate} 
\end{definition}

\begin{theorem}\label{ITD}
Let $ M \in \Lambda$. We have a natural isomorphism
\[    \sem{\vec{x}}{M} \cong T_{D}(M)_{\vec{x}} .
         \]
         \end{theorem}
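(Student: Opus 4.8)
The plan is to prove Theorem~\ref{ITD} by induction on the structure of $M$, matching the inductive clauses of the denotation in Figure~\ref{denL} against the sets of equivalence classes of typing derivations produced by the rules of $E^{S}_A$ (Figure~\ref{IntD2}). In each case I construct, for every $(\Delta, a)$, a bijection between $\sem{\vec{x}}{M}(\Delta, a)$ and $T_{D}(M)_{\vec{x}}(\Delta, a)$, and then I check that these bijections are natural in $(\Delta,a)$, i.e.\ that they intertwine the functorial action on the coend with the operation $\tilde\pi \mapsto \widetilde{[f]\pi\{\morpCone\}}$ that defines $T_{D}(M)_{\vec{x}}$ on morphisms. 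Here the identities recorded after Figure~\ref{ra} --- namely $(\pi\{\morpCone\})\{\theta\} = \pi\{\morpCone\circ\theta\}$, $([f]\pi)\{\morpCone\} = [f](\pi\{\morpCone\})$ and $[g]([f]\pi) = [g\circ f]\pi$ --- reduce naturality to the statement that the syntactic left and right actions of Figures~\ref{la} and~\ref{ra} are exactly the ones induced on the coend by its universal property.

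For $M = x_i$, the denotation is $SD^{n}(\Delta, \seq{\seq{},\dots,\seq{a},\dots,\seq{}})$, and by the definition of $\Of(-)$ a morphism of $SD^{n}$ of this shape is precisely a tuple $\seq{f_1,\dots,f_n}$ with $f_i : \vec{a}_i \to \seq{a}$ and $f_j : \vec{a}_j \to \seq{}$ for $j\neq i$ --- which is exactly the data of the variable rule. Using the observation made after Figure~\ref{ra} that for $j\neq i$ the morphism $f_j$ is uniquely determined (the terminal morphism when $S$ is irrelevant, the identity of $\seq{}$ otherwise), this assignment is a bijection onto the set of variable derivations, on which the congruence of Figure~\ref{congr} is trivial. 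For $M = \la{x}N$, the denotation equals $\sem{\vec{x}\oplus\seq{x}}{N}(\Delta\oplus\seq{\vec{a}'}, a')$ when $a = \iota(\vec{a}', a')$ and $\emptyset$ otherwise; the abstraction rule of Figure~\ref{IntD2} is a bijection between derivations of $\Delta, x:\vec{a}' \vdash N : a'$ and derivations of $\Delta \vdash \la{x}N : \vec{a}'\Rightarrow a'$ that descends to congruence classes, so this case closes by the induction hypothesis applied to $N$.

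The application case $M = PQ$ is the substantive one. Unfolding Figure~\ref{denL} and using the induction hypothesis to rewrite $\sem{\vec{x}}{P}$ and $\sem{\vec{x}}{Q}$ as $T_{D}(P)_{\vec{x}}$ and $T_{D}(Q)_{\vec{x}}$, an element of the set underlying $\sem{\vec{x}}{PQ}(\Delta,a)$ before the coend quotients are taken consists of a list $\vec{a} = \seq{a_1,\dots,a_k}$, contexts $\Gamma_0,\dots,\Gamma_k \in SD^{n}$, a class of derivations of $\Gamma_0 \vdash P : \vec{a}\Rightarrow a$, classes of derivations of $\Gamma_i \vdash Q : a_i$, and a morphism $\morpCone : \Delta \to \bigotimes_{i=0}^{k}\Gamma_i$; this is exactly the data of the premises of the application rule of $E^{S}_A$, and the congruences on the sub-derivations of $P$ and of $Q$ are already absorbed by the induction hypothesis. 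It then remains to show that the quotient imposed by the three nested coends --- over $\vec{a}\in SD$, over $\Gamma_0,\dots,\Gamma_k\in SD^{n}$, and the Day-convolution coends hidden inside the functor $F$ and inside $F^{\flat}$ --- is exactly the congruence $\sim$ of Figure~\ref{congr}: the dinaturality move for the coend over $\vec{a}$ transports a morphism $\seq{\alpha,\vec{g}} : \vec{b}\to\vec{a}$ of $SD$ across the application node, acting by a left action on the $P$-premise and by reindexing and merging on the $Q$-premises; the coends over the $\Gamma_i$ correspond to the right action $(-)\{-\}$ pushed into $\morpCone$; and the Day-convolution coends correspond to the structural regrouping of the $\Gamma_i$. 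Granting this, the pre-quotient bijection descends to $\sem{\vec{x}}{PQ}(\Delta,a) \cong T_{D}(PQ)_{\vec{x}}(\Delta,a)$, and naturality follows as in the first paragraph.

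The main obstacle is exactly that last step: checking that the congruence of Figure~\ref{congr} generates neither more nor fewer identifications than the iterated coend in the clause for $PQ$. Carrying this out requires making the definitions of $\hat{\otimes}$, of $F$, and of the flattening $F^{\flat}$ fully explicit, translating each generating relation of $\sim$ into a cowedge or dinaturality identity and conversely, and tracking carefully how morphisms of $SD$ and of $SD^{n}$ act simultaneously on the $P$-premise, on the family of $Q$-premises, and on the connecting morphism $\morpCone$. The variable and abstraction cases and the naturality verifications are then routine, of the same flavour as the coend manipulations already used for Theorem~\ref{semden}.
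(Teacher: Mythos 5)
Your proposal follows essentially the same route as the paper's own (very terse) proof: induction on the structure of $M$, with the variable and abstraction cases routine and the whole weight resting on showing that in the application case the set of congruence classes of derivations is exactly the coend in Figure~\ref{denL} --- i.e.\ that the congruence of Figure~\ref{congr} realises precisely the dinaturality identifications. You correctly isolate that as the crux and leave it at the same level of detail as the paper does, so the approach matches.
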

\begin{proof}
By induction on the structure of $ M .$ The only non-trivial part is showing that in the application case, the distributor $  T_{D}(M)_{\vec{x}} $ can be described as a coend.
\end{proof}

\subsubsection{Typing Derivations under Reduction}\label{intred}

In this section we will prove that  $\sem{\vec{x}}{M}(\Delta, a) \cong \sem {\vec{x}} {N}(\Delta, a) $ when $ M \to_{\beta} N, $ refining the standard subject reduction and expansion for intersection types. Indeed, we recall that, by Theorem \ref{semden},  $\sem{\vec{x}}{M} \cong T_{D}(M)_{\vec{x}} $ 
hence, if we prove that $\sem{\vec{x}}{M}(\Delta, a) \cong \sem {\vec{x}} {N}(\Delta, a) $ when $ M \to_{\beta} N, $ in particular we have $T_{D}(M)_{\vec{x}} \cong T_{D}(N)_{\vec{x}}  .$ This means that we have a natural bijection between the set of equivalence classes of typing derivations with conclusion $ \Delta \vdash M : \ty $ and the set of equivalence classes of typing derivations with conclusion $ \Delta \vdash N : \ty , $ that is what we called a \emph{proof relevant} denotational semantics.

Let $ M, N \in \Lambda ,  (\fv{M} \setminus \{ x \}) \cup \fv{N} \subseteq \vec{x}  $ and $ x \notin \vec{x} . $  We set $ Sub^{M,x,N}_{\vec{x}}( \Delta, a) =  $ \[ \int^{\tyl \in SD} \int^{\Gamma_{j} \in S D^{n} } \sem{\vec{x} \oplus \seq{\vec{x}}}{M}(\Gamma_0 \oplus \seq{\tyl}, a) \times\] \[ \prod\limits_{i=1}^{l(\vec{a})} \sem{\vec{x}}{N}(\Gamma_{i}, a_i) \times S D^{n}(  \Delta, \Xi  )        \] where $ \Xi = \bigotimes_{j=0}^{\length{\tyl}} \Gamma_j . $ The intuition behind the former distributor is that its action on objects represent the structure that implicitly one considers in standard subject reduction and expansion lemmas\footnote{Reading the integral as an existential quantifier and the product as a conjunction, this analogy should be quite evident.}. We can now state the following (de)substitution lemma:

\begin{lemma}\label{sub}
Let $ M, N \in \Lambda ,  (\fv{M} \setminus \{ x \}) \cup \fv{N} \subseteq \vec{x}  $ and $ x \notin \vec{x} . $ We have a natural isomorphism

\[ \subis{M}{x}{N}{\vec{x}} : \sem{\vec{x}}{M[N/x]} \cong  Sub^{M,x,N}_{\vec{x}}.  \]

\end{lemma}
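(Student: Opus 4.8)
The plan is to proceed by induction on the structure of $M$, establishing the natural isomorphism $\subis{M}{x}{N}{\vec{x}}$ between $\sem{\vec{x}}{M[N/x]}$ and $Sub^{M,x,N}_{\vec{x}}$ by explicit coend manipulation at each step, using Theorem~\ref{ITD} (or equivalently the coend definition of the denotation in Figure~\ref{denL}) to unfold both sides.

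\textbf{Base cases.} If $M = x$, then $M[N/x] = N$, and $\sem{\vec{x}}{x} $ is the identity-like distributor $SD^{n+1}(-, \seq{\seq{},\dots,\seq{a},\dots,\seq{}})$ projecting onto the last coordinate; unfolding $Sub^{x,x,N}_{\vec{x}}(\Delta,a)$, the integral over $\Gamma_0$ forces $\Gamma_0$ to be the empty context and $\tyl = \seq{a}$, so the whole expression collapses to $\sem{\vec{x}}{N}(\Delta,a)$, matching $\sem{\vec{x}}{N[N/x]} = \sem{\vec{x}}{N}$. Wait --- more precisely $M[N/x]=N$ here, so we need $\sem{\vec{x}}{N}\cong Sub^{x,x,N}_{\vec{x}}$, which follows from the co-Yoneda (density) lemma applied to the representable factor $SD^{n}(\Delta,\Xi)$. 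If $M = y \neq x$, then $M[N/x] = y$ and $\sem{\vec{x}}{y}$ does not depend on $N$; on the right-hand side, $\sem{\vec{x}\oplus\seq{x}}{y}(\Gamma_0\oplus\seq{\tyl},a)$ is nonzero only when $\tyl = \seq{}$ (the variable $x$ does not occur), which kills all the $N$-factors via the monad structure of $S$ (using the uniqueness of the terminal/identity morphisms into $\seq{}$ noted after Figure~\ref{IntD2}), and again density collapses the representable, yielding $\sem{\vec{x}}{y}$.

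\textbf{Inductive cases.} For $M = \lambda y. M'$, we have $(\lambda y. M')[N/x] = \lambda y. (M'[N/x])$ (with $y$ fresh), and both sides are computed by the abstraction clause of Figure~\ref{denL}, which simply shifts $\Delta$ to $\Delta \oplus \seq{\vec{b}}$ and reads off $a = \iota(\vec{b},a')$; the isomorphism is then obtained by applying the induction hypothesis $\subis{M'}{x}{N}{\vec{x}\oplus\seq{y}}$ and commuting the extra tensor factor $\seq{\vec{b}}$ past the coends (which is legitimate since $\otimes$ on $SD^{n}$ is defined componentwise and the coends range over the other components). For $M = M_1 M_2$, we have $(M_1 M_2)[N/x] = (M_1[N/x])(M_2[N/x])$; we unfold the application clause, apply the induction hypotheses $\subis{M_1}{x}{N}{\vec{x}}$ and $\subis{M_2}{x}{N}{\vec{x}}$ inside, and then reorganize the resulting nested coends. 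This reorganization is where the bulk of the work lies: one has a coend over the $\vec{a}$ splitting the arrow type of $M_1$, coends over the contexts $\Gamma_0,\dots$, and --- after substituting the $Sub$-expressions --- further coends over the $\tyl$'s and inner contexts coming from each copy of $N$; using the Fubini theorem for coends, cocontinuity of $\times$, the density/co-Yoneda lemma to absorb representables $SD^{n}(-,-)$, and the monoidal structure of $SD$ (in particular $S$ applied to a coproduct splits as a product, Proposition on $S(A\sqcup B)$, and $\otimes$ distributes appropriately), one reassembles exactly $Sub^{M_1 M_2, x, N}_{\vec{x}}$. Naturality of the isomorphism in $(\Delta, a)$ is inherited from the universal property of the coends at each stage, so it need only be checked that the constructed bijections commute with the left and right actions on derivations, which follows from the identities $(\pi\{\eta\})\{\theta\} = \pi\{\eta\circ\theta\}$, $[f](\pi\{\eta\}) = ([f]\pi)\{\eta\}$ and $[g]([f]\pi) = [g\circ f]\pi$ recorded earlier.

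\textbf{Main obstacle.} The delicate point is the application case: matching up the ``interleaving'' of the substituted copies of $N$ inside the denotation of $M_1[N/x]$ (where $x$ may occur several times, each occurrence carrying its own slice of the intersection type $\tyl$) with the single outer product $\prod_{i} \sem{\vec{x}}{N}(\Gamma_i, a_i)$ in $Sub^{M_1 M_2, x, N}_{\vec{x}}$ requires carefully tracking how the resource monad $S$ partitions and recombines the list $\tyl$, and invoking the Seely-type equivalence $S(A_1 \sqcup \dots \sqcup A_k) \simeq SA_1 \times \dots \times SA_k$ together with the componentwise tensor $\otimes$ on contexts to linearize the bookkeeping. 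Everything else is a routine, if lengthy, chain of coend isomorphisms.
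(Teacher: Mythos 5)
Your proposal matches the paper's proof, which is exactly an induction on the structure of $M$ carried out by (lengthy) coend manipulations, with the application case singled out as the delicate step. The one point the paper explicitly flags there---that the symmetry of the tensor product on $SD$ is crucial---is implicitly present in your appeal to the Seely-type equivalence and to the bookkeeping of how the copies of $N$ are partitioned and recombined, so your account is, if anything, more detailed than the paper's.
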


\begin{proof}
By induction on the structure of $ M $ \textit{via} lengthy coend manipulations. It is worth noting that, in the proof of the application case, the hypothesis about the symmetry of the tensor product over $ SD$ is crucial, as expected.
\end{proof}

\begin{theorem} \label{sound}
Let $ M , N \in \Lambda ,  \vec{x} \supseteq \fv{M} \cup \fv{N} $ and $ M \to_{\beta} N . $ We have  a natural isomorphism

 \[    \sem{\vec{x}}{M \to_{\beta} N} :   \sem{\vec{x}}{M} \cong \sem{\vec{x}}{N} .\]
 \end{theorem}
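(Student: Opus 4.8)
The plan is to combine the (de)substitution Lemma~\ref{sub} with the compositionality of the denotation, in two stages: a root $\beta$-step, then propagation through an arbitrary context. For the root step, suppose $M = (\lambda x. P)Q$ and $N = P[Q/x]$; after renaming the bound variable we may assume $x \notin \vec{x}$, and since $\fv{P[Q/x]} \subseteq (\fv{P}\setminus\{x\})\cup\fv{Q} = \fv{(\lambda x.P)Q}$, the list $\vec{x}$ is admissible for both $M$ and $N$. Unfolding the application clause of Figure~\ref{denL} and then the abstraction clause on the leftmost factor — whose argument $\iota(\vec{a},a)$ is precisely of the form required by the first case of that clause — we obtain, for all $\Delta \in SD^n$ and $a \in D$,
\begin{align*}
\sem{\vec{x}}{(\lambda x. P)Q}(\Delta, a) &= \int^{\vec{a} \in SD} \int^{\Gamma_0, \dots, \Gamma_k \in SD^n} \sem{\vec{x}\oplus\seq{x}}{P}(\Gamma_0 \oplus \seq{\vec{a}}, a) \\
&\qquad \times \prod_{i=1}^{k} \sem{\vec{x}}{Q}(\Gamma_i, a_i) \times SD^n\Bigl(\Delta, \bigotimes_{i=0}^{k}\Gamma_i\Bigr),
\end{align*}
which is, coend variable by coend variable, exactly the expression defining $Sub^{P,x,Q}_{\vec{x}}(\Delta, a)$. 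Because the abstraction clause of the denotation is a strict identity (not merely an isomorphism), this identification is natural in $\Delta$ and $a$, so $\sem{\vec{x}}{(\lambda x.P)Q} = Sub^{P,x,Q}_{\vec{x}}$ on the nose. Lemma~\ref{sub} then yields $Sub^{P,x,Q}_{\vec{x}} \cong \sem{\vec{x}}{P[Q/x]}$, which settles the root case.

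For an arbitrary step $M \to_\beta N$ we argue by induction on the one-hole context $\ctx$ with $M = \ctxp{(\lambda x.P)Q}$ and $N = \ctxp{P[Q/x]}$, tracking the ambient variable list. The base case $\ctx = \ctxhole$ is the root step. If $\ctx = \lambda y.\ctxtwo$ with $y$ fresh for $\vec{x}$, the abstraction clause gives $\sem{\vec{x}}{\lambda y.\ctxtwop{-}}(\Delta, a) = \sem{\vec{x}\oplus\seq{y}}{\ctxtwop{-}}(\Delta\oplus\seq{\vec{c}}, a')$ when $a = \iota(\vec{c}, a')$ and $\emptyset$ otherwise; the induction hypothesis, applied with the extended list $\vec{x}\oplus\seq{y}$, supplies the required isomorphism. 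If $\ctx = \ctxtwo R$ or $\ctx = R\,\ctxtwo$, the denotation of the application is obtained by applying the coend $\int^{\vec{a}}$ and the finite products $\prod_i$ to the factor $\sem{\vec{x}}{\ctxtwop{-}}$ and to copies of the unchanged factor; since coends and finite products are functorial, the isomorphism $\sem{\vec{x}}{\ctxtwop{(\lambda x.P)Q}} \cong \sem{\vec{x}}{\ctxtwop{P[Q/x]}}$ provided by the induction hypothesis transports to an isomorphism of the whole denotation, natural in $\Delta$ and $a$. This closes the induction and proves the theorem.

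The genuine difficulty sits in Lemma~\ref{sub}, whose proof runs by induction on $P$ and whose application case is exactly where the symmetry of the tensor on $SD$ is needed, in order to re-bracket and permute the contexts $\Gamma_i$ shared among the occurrences of $x$; this is the step I expect to require real care. At the level of Theorem~\ref{sound} itself the only remaining points are bureaucratic: checking that the abstraction clause of the denotation is a strict identity, so that the root-case equation holds without a stray isomorphism, and observing that passing to the extended list $\vec{x}\oplus\seq{y}$ in the abstraction-context case is legitimate because the induction hypothesis is stated for an arbitrary admissible list. Finally, chaining the resulting isomorphism with Theorems~\ref{semden} and~\ref{ITD} produces the announced natural bijection between equivalence classes of derivations of $\Delta \vdash M : a$ and of $\Delta \vdash N : a$, i.e.\ proof relevance.
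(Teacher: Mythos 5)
Your proposal is correct and follows the same route as the paper: the paper's proof is exactly ``induction on the reduction step, with the base case an immediate consequence of Lemma~\ref{sub}'', and your unfolding of the application and abstraction clauses to identify $\sem{\vec{x}}{(\lambda x.P)Q}$ with $Sub^{P,x,Q}_{\vec{x}}$, followed by context closure via functoriality of the coend and product, is precisely the detail the paper leaves implicit.
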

\begin{proof}
By induction on the reduction step $ M \to_{\beta} N . $ The base case is an immediate consequence of the former lemma.
\end{proof}

\section{Head-Normalization}\label{h-norma}

In this section we present a parametric head-normalization theorem for our systems, adapting the reducibility argument of \cite{carv:sem, kri:lam} to our setting. The construction of the argument is classical, but there is a technical improvement to be made in order to lift it to a category-theoretic perspective. 

We remark that our argument can be adapted also to characterize weak and strong normalization, as shown in \cite{ol:thesis}.
Given a $  \lambda$-term $ M ,$ we denote as $ H(M) $ its \emph{head-reduct}\footnote{In the definition we use a well-known characterization of $ \lambda$-terms, see \cite{kri:lam}.}:
\[      H(M) = \begin{cases}   M  & \text{ if } M =    \la{\vec{x}} x \vec{N}\\  \la{\vec{x}} \subst{M}{x}{N} \vec{N} & \text{ if } M =  \la{\vec{x}} (\la{x} M) N \vec{N}    .    \end{cases}                  \]
 If $ H(M) = M $ we say that $ M $ is a \emph{head-normal form.}

\begin{figure*}	\scalebox{0.9}{\parbox{1.05\linewidth}{
\begin{align*}\label{eq1}
 \begin{prooftree}
\hypo{\pi_0}\ellipsis{}{ \Gamma_{0} \vdash \Tm : \vec{b} \Rightarrow \ty }\hypo{[f_{i}]\pi_{\alpha(i)}}
\ellipsis{}{\Gamma_{\alpha(i)} \vdash \Tm : b_i}
\delims{\left(}{\right)_{i =1}^{k'}}
\hypo{ (1 \otimes \alpha^{\star}) \circ \morpCone : \Delta \to \Gamma_0 \otimes \bigotimes_{i = 1}^{k'} \Gamma_{\alpha(i)} }
	\infer3[]{ \Delta \vdash \appl{\Tm}{\Tmtwo} : \ty }
\end{prooftree}  &\sim   \begin{prooftree}
\hypo{[ \seq{\alpha, \vec{f}} \Rightarrow 1]  \pi_0 }\ellipsis{}{ \Gamma_{0} \vdash \Tm : \tyl \Rightarrow \ty} \hypo{\pi_i }
\ellipsis{}{\Gamma_{i} \vdash N : \ty_i}
\delims{\left(}{\right)_{i =1}^{k}}
\hypo{  \morpCone : \Delta \to \bigotimes_{j=0}^{k} \Gamma_{j} }
	\infer3[]{ \Delta \vdash \appl{\Tm}{\Tmtwo} : \ty }
\end{prooftree} \\
 \begin{prooftree}
\hypo{\pi_0 \{ \theta_0 \} } \ellipsis{}{   \Gamma_{0} \vdash \Tm : \tyl \Rightarrow \ty}\hypo{\pi_i \{\theta_i \} }
\ellipsis{}{\Gamma_{i} \vdash N : \ty_i}
\delims{\left(}{\right)_{i =1}^{k}}
\hypo{ \morpCone : \Delta \to \bigotimes_{j =0}^{k} \Gamma_j }
	\infer3[]{ \Delta \vdash \appl{\Tm}{\Tmtwo} : \ty }
\end{prooftree}  &\sim   \begin{prooftree}
\hypo{\pi_0}\ellipsis{}{  \Gamma'_{0} \vdash \Tm : \tyl \Rightarrow \ty}\hypo{\pi_i }
\ellipsis{}{\Gamma'_{i} \vdash N : \ty_i}
\delims{\left(}{\right)_{i =1}^{k}}
\hypo{   (\bigotimes_{j=0}^{k} \theta_j) \circ  \morpCone}
	\infer3[]{ \Delta \vdash \appl{\Tm}{\Tmtwo} : \ty}
\end{prooftree}\end{align*} }}

Where $   \seq{\alpha, f_1, \dots, f_{k'}} : \tyl = \seqdots{\ty}{1}{k}  \to \vec{b} = \seqdots{b}{1}{k'}  $ and $ \theta_i : \Gamma_i \to   \Gamma'_{i}.$
\caption{Congruence on typing derivations.} 
\hrulefill
\label{congr}
\end{figure*} 
\begin{lemma} 
Let $ M \in \Lambda $  be a head-normal form. Then  $ \sem{\vec{x}}{M} \neq \emptyset_{D_A, SD_{A}^{\length{\vec{x}}}} . $  
\end{lemma}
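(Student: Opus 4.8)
The plan is to show that a head-normal form always admits at least one typing derivation in the system $E^S_A$, so that its denotation, which by Theorem~\ref{ITD} is the set of equivalence classes of such derivations, is nonempty. A head-normal form has the shape $M = \la{x_1}\dots\la{x_m}\, x_j N_1 \dots N_p$ with $x_j$ among the $x_i$. First I would reduce to the case $m = 0$: by the interpretation clauses (Figure~\ref{denL}), $\sem{\vec x}{\la{x} M'}(\Delta,a)$ is $\sem{\vec x \oplus \seq{x}}{M'}(\Delta\oplus\seq{\vec a'},a')$ whenever $a$ factors through $\iota$ as $\iota(\vec a', a')$, and is empty otherwise; since our ambient type $D$ satisfies $D = D\Rightarrow D$, every atom we need is of the form $\seq{\dots}\Rightarrow(\dots)$, so a nonempty denotation for $x_j N_1 \dots N_p$ with a suitably chosen type lifts through the $\lambda$-abstractions. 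So it suffices to build one derivation of $\Delta \vdash x_j N_1 \dots N_p : a$ for some context $\Delta$ (with $\supp{\Delta} = \vec x$) and some type $a$.

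The core construction is to type the head variable with an empty multiset of demands on each argument, i.e.\ to pick $a$ arbitrary (any atom, or inductively any type) and set the type of $x_j$ to $\seq{\,}\Rightarrow \seq{\,}\Rightarrow \cdots \Rightarrow \seq{\,}\Rightarrow a$ with $p$ leading empty multisets. Concretely, I would first give the axiom derivation $x_1:\vec b_1,\dots,x_j:\seq{\seq{}\Rightarrow\cdots\Rightarrow a},\dots \vdash x_j : \seq{}\Rightarrow\cdots\Rightarrow a$, where the $f_i : \vec b_i \to \seq{}$ for $i\neq j$ are the unique morphisms guaranteed by the structure of the resource monad (as noted after Figure~\ref{ra}), and $\vec b_j$ is the singleton list of that type with $f_j$ the identity. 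Then I apply the application rule $p$ times: at each step the function premise has type $\seq{\,}\Rightarrow(\text{rest})$, so the index set $[k]$ of argument premises is empty, hence no typing of $N_i$ is required at all, and the gluing morphism $\morpCone : \Delta \to \bigotimes_{i=0}^{0}\Gamma_i = \Gamma_0$ can be taken to be an identity (or, for the irrelevant monads, composed with terminal arrows to adjust the contexts of the erased $x_i$). This yields a well-formed derivation $\pi$ of $\vec x : \vec b \vdash x_j N_1\dots N_p : a$, so $\tilde\pi \in T_D(x_j N_1 \dots N_p)_{\vec x}(\vec b, a)$, which is therefore nonempty; by Theorem~\ref{ITD} the same holds for $\sem{\vec x}{M}$.

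The step I expect to require the most care is the bookkeeping of contexts under the empty-multiset application rule and its compatibility with the different resource monads. In the linear and relevant cases the context of a premise whose variable does not occur must be the empty-list context $\seq{\seq{},\dots,\seq{}}$, whereas for the cartesian/semicartesian monads one has the freedom (and the obligation, to match the prescribed $\Delta$) to use terminal morphisms; I would handle this uniformly by observing that $\Of^S A$ always contains the relevant structural morphism $\seq{\,}\to\seq{\,}$ or $\vec a\to\seq{\,}$ needed, exactly as spelled out in the remark following Figure~\ref{ra}. A secondary subtlety is checking that the chosen $a$ and $\Delta$ genuinely live in the colimit category $D_A$ and that the $p$-fold nested arrow type is a legitimate object there — but this is immediate since $\iota : SD_A^o\times D_A \to D_A$ is defined on all of $D_A$, so $\seq{\,}\Rightarrow b$ is available for every $b\in D_A$. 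Everything else is a routine unwinding of the rules in Figure~\ref{IntD2}.
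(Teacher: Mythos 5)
Your proposal is correct and follows essentially the same route as the paper: strip the leading abstractions, type the head variable with $\seq{}\Rightarrow\cdots\Rightarrow\seq{}\Rightarrow a$ so that each application rule has an empty family of argument premises, use the unique structural morphisms $\vec b_i\to\seq{}$ for the non-head variables, and transfer the resulting nonempty set of derivations to $\sem{\vec x}{M}$ via the isomorphism with the intersection type distributor. The paper's proof is exactly this construction (with the same type $b=\seq{}\Rightarrow\dots\Rightarrow\seq{}\Rightarrow a$), so no further comparison is needed.
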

\begin{proof}

We have that $ M = \lambda x_1 \dots \lambda x_m . x Q_1 \cdots Q_n . $ We prove it for $ x Q_1 \cdots Q_n ,$ choosing as list of variables $ \vec{y}  = \vec{x} \oplus \seqdots{x}{1}{m} = \seqdots{y}{1}{k}     $ where $ k = m + \length{\vec{x}}  $, the extension being immediate.  

Let $ b =   \seq{} \Rightarrow \dots \Rightarrow \seq{} \Rightarrow a .$   It is enough to take the following typing derivation $ \pi = $

\[\begin{prooftree}
\hypo{ 1_{\seq{}}, \dots ,  1_{\seq{b}} , \dots,  1_{\seq{}}               }
\infer1{y_1 : \seq{}, \dots, x: \seq{b}, \dots, y_k  : 1_{\seq{}} \vdash x : \seq{} \Rightarrow \dots \Rightarrow \seq{} \Rightarrow a }
\infer{1}{y_1 : \seq{}, \dots,  x: \seq{b}, \dots, y_k : \seq{} \vdash x Q_1 \cdots Q_n : a}
\end{prooftree}\]

Then $ T_{D}(M)_{\vec{y}}(\seq{\seq{}, \dots, \seq{b}, \dots, \seq{}}, \ty  ) $ is non-empty for all types $ \ty \in D . $ Then we apply Theorem \ref{semden} and conclude. 


\end{proof}

\begin{corollary}\label{hnormal}
Let $ M \in \Lambda . $ If $ M $ is head-normalizable then $\sem{\vec{x}}{M}  \neq \emptyset_{D_A, SD_{A}^{\length{\vec{x}}}}  . $
\end{corollary}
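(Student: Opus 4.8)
The plan is to reduce to the previous lemma via subject expansion. If $M$ is head-normalizable, then by definition there is a finite sequence of head-reduction steps $M = M_0 \to_\beta M_1 \to_\beta \dots \to_\beta M_k$ where each step is a head step $M_{i+1} = H(M_i)$ and $M_k$ is a head-normal form. I would proceed by induction on $k$, the number of head-reduction steps needed to reach a head-normal form.

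For the base case $k = 0$, the term $M$ is itself a head-normal form, so $\sem{\vec{x}}{M} \neq \emptyset_{D_A, SD_A^{\length{\vec{x}}}}$ is exactly the content of the preceding lemma. For the inductive step, suppose $M \to_\beta M' = H(M)$ and $M'$ reaches a head-normal form in $k-1$ steps. By the induction hypothesis, $\sem{\vec{x}}{M'} \neq \emptyset_{D_A, SD_A^{\length{\vec{x}}}}$, i.e., there exist $\Delta$ and $a$ with $\sem{\vec{x}}{M'}(\Delta, a) \neq \emptyset$. A head-reduction step is in particular a $\beta$-reduction step, so Theorem \ref{sound} applies and gives a natural isomorphism $\sem{\vec{x}}{M \to_\beta M'} : \sem{\vec{x}}{M} \cong \sem{\vec{x}}{M'}$. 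Since this is an isomorphism of distributors, it is in particular a pointwise bijection, so $\sem{\vec{x}}{M}(\Delta, a) \cong \sem{\vec{x}}{M'}(\Delta, a) \neq \emptyset$. Hence $\sem{\vec{x}}{M} \neq \emptyset_{D_A, SD_A^{\length{\vec{x}}}}$.

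One small point to verify is that Theorem \ref{sound} as stated requires $\vec{x} \supseteq \fv{M} \cup \fv{N}$, and one must check that head reduction does not create free variables, so that the same $\vec{x}$ works along the whole reduction sequence; this is immediate since $\fv{M'} \subseteq \fv{M}$ for $M \to_\beta M'$. The main obstacle, such as it is, is purely bookkeeping: making sure the finite head-reduction sequence is well-defined (which is guaranteed by the definition of head-normalizability via the operator $H$) and that the chain of isomorphisms from Theorem \ref{sound} composes, which is automatic since composition of natural isomorphisms is a natural isomorphism. There is genuinely no hard analytic content here — all the work was done in establishing the invariance of the semantics under $\beta$-reduction (Theorem \ref{sound}) and the non-emptiness on head-normal forms (the preceding lemma); this corollary is just their combination.
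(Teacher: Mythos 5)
Your argument is exactly the paper's proof (stated there in one line as ``Corollary of the former lemma and Theorem \ref{sound}''), merely written out as an explicit induction on the length of the head-reduction sequence, with the base case given by the preceding lemma and the inductive step by the invariance under $\beta$-reduction. The extra check that $\fv{M'} \subseteq \fv{M}$ so the same $\vec{x}$ works throughout is a sensible detail the paper leaves implicit.
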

\begin{proof}
Corollary of the former lemma and Theorem \ref{sound}.
\end{proof}

We are now ready to present our reducibility argument. For a set $\mathcal{X} \subseteq \Lambda  $ we say that $ \mathcal{X} $ is \emph{saturated} if $ \subst{M}{x}{N} N_1 \dots N_n \in \mathcal{X}   $ implies $ ((\lambda x . M)N)N_1 \dots N_n \in \mathcal{X} . $ Given $ \mathcal{X}_1, \mathcal{X}_2 \subseteq \Lambda ,$ we write $ \mathcal{X}_1 \Rightarrow \mathcal{X}_2 = \{ M \in \Lambda  \mid \text{ for all } N \in \mathcal{X}_1, MN \in \mathcal{X}_2    \} .$ 

Given a small category  $ A $ an \emph{interpretation} is a functor $ I : A \to ((\wp \Lambda)^{\ast}, \subseteq) ,$ where $ (\wp \Lambda)^{\ast} = \{ \mathcal{X} \subseteq \Lambda \mid \mathcal{X}  \text{ is saturated }   \} . $ Given $ \delta \in D_{A} \sqcup S D_A $ we define the \emph{set of realizers} of $ \delta $ by induction as follows:
\[   \sem{I}{o} = I(o) \qquad \sem{I}{\seq{}} = \Lambda \qquad \sem{I}{\seq{a_0, \dots, a_k}} = \bigcap                        _{i= 0}^{k} \sem{I}{a_i}  \] \[ \sem{I}{\vec{a} \Rightarrow \ty} = \sem{I}{\tyl} \Rightarrow \sem{}{\ty}  \]

By construction we have that $ \sem{I}{\delta} $ is saturated. 
 
\begin{lemma}\label{functR}
Let $ \delta, \delta' \in D_{A} \sqcup S D_A . $ If $ f : \delta \to \delta' $ then $ \sem{I}{\delta} \subseteq \sem{I}{\delta'} . $
\end{lemma}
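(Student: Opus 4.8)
The plan is to prove the statement by induction on the structure of $\delta \in D_A \sqcup SD_A$, following the inductive definition of the set of realizers and simultaneously distinguishing the two cases for the shape of $\delta$ (an object of $D_A$, hence an atom $o$ or an arrow type $\vec{a} \Rightarrow a$; or a list $\seq{a_0,\dots,a_k} \in SD_A$). Since $D_A$ is a filtered colimit of the $D_n$, every morphism $f : \delta \to \delta'$ lives at some finite stage, so the morphisms are exactly those generated by the rules in Figure~\ref{IntD1}, which is what makes the induction go through. The base cases are: if $\delta = o$ and $f : o \to o'$ in $A$, then $\sem{I}{o} = I(o) \subseteq I(o') = \sem{I}{o'}$ because $I$ is a \emph{functor} into $((\wp\Lambda)^\ast, \subseteq)$, hence monotone; and if $\delta = \seq{}$ is the empty list, the only relevant morphisms land it in another empty list (by the structure of resource monads), and $\sem{I}{\seq{}} = \Lambda$ absorbs trivially.

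For the inductive step on arrow types, suppose $f = \seq{\alpha,\vec{g}} \Rightarrow g : (\vec{a} \Rightarrow a) \to (\vec{b} \Rightarrow b)$, built from $\seq{\alpha,\vec{g}} : \vec{b} \to \vec{a}$ (note the contravariance: the domain list morphism goes the \emph{other} way) and $g : a \to b$. I want $\sem{I}{\vec{a}} \Rightarrow \sem{}{a} \subseteq \sem{I}{\vec{b}} \Rightarrow \sem{}{b}$. Take $M$ in the left-hand side and $N \in \sem{I}{\vec{b}}$; I must show $MN \in \sem{}{b}$. By the induction hypothesis applied to $\seq{\alpha,\vec{g}} : \vec{b} \to \vec{a}$ we get $\sem{I}{\vec{b}} \subseteq \sem{I}{\vec{a}}$, so $N \in \sem{I}{\vec{a}}$, hence $MN \in \sem{}{a}$; then the induction hypothesis for $g : a \to b$ gives $\sem{}{a} \subseteq \sem{}{b}$, so $MN \in \sem{}{b}$. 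This is the crux of the argument and the place where the contravariance of the first component of the exponential is used essentially — the realizer interpretation of $\Rightarrow$ is antitone in its first argument, which matches the fact that a morphism of arrow types in $D_A$ is contravariant in the list component.

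For the list case, a morphism $\seq{\alpha, f_1,\dots,f_k} : \seq{a_1,\dots,a_{k'}} \to \seq{a'_1,\dots,a'_k}$ with $\alpha \in \Of^S([k],[k'])$ and $f_i : a_{\alpha(i)} \to a'_i$ yields $\sem{I}{\seq{a_1,\dots,a_{k'}}} = \bigcap_{j} \sem{I}{a_j} \subseteq \bigcap_i \sem{I}{a_{\alpha(i)}} \subseteq \bigcap_i \sem{I}{a'_i} = \sem{I}{\seq{a'_1,\dots,a'_k}}$: the first inclusion holds because intersecting over (the image of) $\alpha$ is a bigger set than intersecting over all indices — here one uses that $\alpha$ is a function $[k] \to [k']$, so its image is a subset of $[k']$ — and the second uses the induction hypothesis on each $f_i : a_{\alpha(i)} \to a'_i$. (When $S$ is linear or relevant, $\alpha$ is injective/surjective/bijective, but the argument only needs that $\alpha$ is \emph{some} function, so it is uniform across all resource monads.) I expect the main obstacle to be purely bookkeeping: getting the variance right in the arrow case and making sure the list-morphism inclusions are stated with the correct direction of $\alpha$; there is no deep content beyond monotonicity of $I$, the induction hypothesis, and elementary set-theoretic manipulation of intersections, so the proof is short once the case analysis is set up correctly.
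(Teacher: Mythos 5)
Your proof is correct and follows essentially the same route as the paper, whose entire proof is the one-liner ``by induction on the structure of $\delta'$'' --- your case analysis (functoriality of $I$ into $((\wp\Lambda)^\ast,\subseteq)$ for atoms, the contravariance of the list component in the arrow case, and reindexing the intersection along $\alpha$ in the list case) is exactly the content that one-liner leaves implicit. The only cosmetic point is that you phrase the induction as being on $\delta$ while the paper uses $\delta'$; since the contravariant component of an arrow-type morphism swaps source and target, the cleanest bookkeeping is induction on the structure of the morphism $f$ itself (equivalently, on the combined size of $\delta$ and $\delta'$), but this changes nothing of substance.
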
 
\begin{proof}
By induction on the structure of $ \delta' .$ 
\end{proof} 

\begin{lemma}\label{adeq}
Let $ M , N_1, \dots, N_n \in \Lambda  $ and $ I $ be an interpretation. If $ x_1 : \tyl_1, \dots, x_n : \tyl_n \vdash M : \ty $ and $  N_i \in \sem{I}{\vec{a}_i} $ then $   \subst{M}{x_1, \dots, x_n}{N_1, \dots, N_n} \in \sem{I}{\ty} .     $
\end{lemma}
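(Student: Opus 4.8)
The plan is to proceed by induction on the typing derivation $\pi$ of $x_1 : \tyl_1, \dots, x_n : \tyl_n \vdash M : \ty$, following the three cases of the type system $E^S_A$ in Figure \ref{IntD2}. Throughout, abbreviate $\vec N = N_1, \dots, N_n$ and $\vec x = x_1, \dots, x_n$, and write $M\{\vec N/\vec x\}$ for the parallel substitution; the hypothesis is $N_i \in \sem{I}{\tyl_i}$ for each $i$, and we want $M\{\vec N/\vec x\} \in \sem{I}{\ty}$. The key structural fact I will use repeatedly is that every set $\sem{I}{\delta}$ is saturated (stated just before the lemma) and that, by the definition of realizers, $\sem{I}{\seq{}} = \Lambda$, $\sem{I}{\seqdots{a}{0}{k}} = \bigcap_{i}\sem{I}{a_i}$, and $\sem{I}{\tyl \Rightarrow \ty} = \sem{I}{\tyl} \Rightarrow \sem{I}{\ty}$.

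\textbf{Variable case.} Here $M = x_j$ and the derivation is the variable rule, so $\tyl_j \to \seq{\ty}$ is inhabited — i.e. $\tyl_j = \seq{\ty_1, \dots, \ty_\ell}$ with some $\ty_p = \ty$ up to a morphism $f : \ty_p \to \ty$ coming from the witness of the variable rule (the $f_i : \tyl_i \to \seq{\ty}$), while for $j' \neq j$ the map $f_{j'} : \tyl_{j'} \to \seq{}$ is forced (identity or terminal, as noted after Figure \ref{ra}, so no constraint on $N_{j'}$). Since $N_j \in \sem{I}{\tyl_j} = \bigcap_p \sem{I}{\ty_p} \subseteq \sem{I}{\ty_p}$, and since $f : \ty_p \to \ty$ gives $\sem{I}{\ty_p} \subseteq \sem{I}{\ty}$ by Lemma \ref{functR}, we get $M\{\vec N/\vec x\} = N_j \in \sem{I}{\ty}$.

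\textbf{Abstraction case.} Here $M = \la{x} M'$, $\ty = \tyl \Rightarrow \ty'$, and the premise is $\Delta, x : \tyl \vdash M' : \ty'$ with $\Delta = (x_1 : \tyl_1, \dots, x_n : \tyl_n)$. To show $(\la{x} M')\{\vec N/\vec x\} \in \sem{I}{\tyl} \Rightarrow \sem{I}{\ty'}$, take any $P \in \sem{I}{\tyl}$; we must show $(\la{x} M'\{\vec N/\vec x\})P \in \sem{I}{\ty'}$. By the induction hypothesis applied to the premise with the extended substitution $(\vec N, P)$ (legitimate since $P \in \sem{I}{\tyl}$), we have $M'\{\vec N/\vec x\}\{P/x\} \in \sem{I}{\ty'}$; then saturation of $\sem{I}{\ty'}$ (with the empty list of extra arguments $N_1 \dots N_n$) gives $(\la{x} M'\{\vec N/\vec x\})P \in \sem{I}{\ty'}$, as desired.

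\textbf{Application case.} Here $M = M'\,M''$ and the derivation has premises $\Gamma_0 \vdash M' : \seqdots{a}{1}{k} \Rightarrow \ty$, premises $\Gamma_i \vdash M'' : a_i$ for $i = 1, \dots, k$, and a morphism $\morpCone : \Delta \to \bigotimes_{i=0}^k \Gamma_i$ in $SD^n$. The issue is that the contexts $\Gamma_i$ need not literally be $\Delta$; using Lemma \ref{functR} componentwise (the $i$-th component of $\morpCone$ gives $\sem{I}{\Delta_i} \subseteq \sem{I}{(\Gamma_j)_i}$ suitably summed) together with the fact that $\sem{I}{-}$ turns $\oplus$ into $\cap$, one checks $N_i \in \sem{I}{\tyl_i}$ implies $N_i$ lies in the realizer set of each relevant component of each $\Gamma_j$; this is precisely where the behaviour of realizers under the monad structure (contexts built from $\oplus$, morphisms from the structural maps) must be reconciled with intersection. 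Applying the induction hypothesis to the $M'$ premise gives $M'\{\vec N/\vec x\} \in \sem{I}{\seqdots{a}{1}{k}} \Rightarrow \sem{I}{\ty} = (\bigcap_i \sem{I}{a_i}) \Rightarrow \sem{I}{\ty}$, and to each $M''$ premise gives $M''\{\vec N/\vec x\} \in \sem{I}{a_i}$, hence $M''\{\vec N/\vec x\} \in \bigcap_i \sem{I}{a_i}$; by definition of $\Rightarrow$ on realizer sets, $(M'\,M'')\{\vec N/\vec x\} = M'\{\vec N/\vec x\}\,M''\{\vec N/\vec x\} \in \sem{I}{\ty}$.

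\textbf{Main obstacle.} The delicate point is the bookkeeping in the application case: matching the parametric context morphism $\morpCone : \Delta \to \bigotimes_{j} \Gamma_j$ against the realizer semantics, i.e. verifying that a realizer of $\sem{I}{\Delta_i}$ simultaneously realizes the appropriate pieces of $\Gamma_0, \dots, \Gamma_k$ for whichever resource monad $S$ we are working with (linear, affine, relevant, cartesian). This is handled uniformly by Lemma \ref{functR} plus the clause $\sem{I}{\seqdots{a}{0}{k}} = \bigcap_i \sem{I}{a_i}$, but it is the step that requires care, since it is exactly here that the $S$-parametricity of the construction is exercised and that duplication/erasure of realizers (for the non-linear monads) has to be shown harmless — which it is, because $\bigcap$ is idempotent and $\Lambda$ is the realizer of the empty context.
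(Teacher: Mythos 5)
Your proposal is correct and follows exactly the strategy the paper indicates for Lemma \ref{adeq}: induction on the structure of $M$ (equivalently, on the syntax-directed typing derivation), invoking Lemma \ref{functR} to transport realizers along the morphisms $f_i$ in the variable rule and along the context morphism $\morpCone : \Delta \to \bigotimes_j \Gamma_j$ in the application rule, and using saturation in the abstraction case. The paper's proof is only a one-line sketch, and your write-up supplies the intended details, correctly identifying the application-case bookkeeping (that $\sem{I}{-}$ sends $\oplus$ to $\cap$, so each $N_j$ realizes every component $(\Gamma_i)_j$) as the only delicate point.
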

\begin{proof}
By induction on the structure of $M , $ applying Lemma \ref{functR}. 
\end{proof}

We define  $ \mathcal{HN} = \{ M \in \Lambda \mid  \text{ The head-reduction of $M$ ends} \}$  and $ \mathcal{HN}_{0} = \{ xN_1 \dots N_n \mid N_i \in \Lambda  \} . $ We remark that $ \mathcal{V} \subseteq \mathcal{HN}_{0} . $

\begin{lemma}
 $\mathcal{HN}$ is saturated. 
\end{lemma}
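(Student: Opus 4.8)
The plan is to reduce the statement to a property of the head-reduction operator $H$. Recall that, since $H$ is a deterministic function on $\Lambda$, the phrase ``the head-reduction of $M$ ends'' means exactly that the sequence $M, H(M), H(H(M)), \dots$ is eventually stationary, i.e.\ reaches a head-normal form after finitely many applications of $H$. So, to show $\mathcal{HN}$ is saturated, I would fix $M, N, N_1, \dots, N_n \in \Lambda$, assume $\subst{M}{x}{N}N_1\dots N_n \in \mathcal{HN}$, and prove $((\la{x} M)N)N_1\dots N_n \in \mathcal{HN}$.

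The key step is to match $((\la{x} M)N)N_1\dots N_n$ against the two shapes in the definition of $H$: it is of the second form $\la{\vec z}(\la{x} M')N'\vec Q$ with \emph{empty} prefix of abstractions $\vec z$, with $M' = M$, $N' = N$ and $\vec Q = N_1,\dots,N_n$. Hence, directly by the defining clause of $H$, we get $H\bigl(((\la{x} M)N)N_1\dots N_n\bigr) = \subst{M}{x}{N}N_1\dots N_n$. Consequently the $H$-iteration issued from $((\la{x} M)N)N_1\dots N_n$ is precisely the $H$-iteration issued from $\subst{M}{x}{N}N_1\dots N_n$ with one extra term prepended; since the latter reaches a head-normal form in finitely many steps by hypothesis, so does the former, that is $((\la{x} M)N)N_1\dots N_n \in \mathcal{HN}$. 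As the data were arbitrary, this establishes saturation.

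I expect no genuine obstacle here: the argument is essentially a one-step unfolding. The only points requiring a little care are (i) recognising that the list of leading abstractions of the term $((\la{x} M)N)N_1\dots N_n$ is empty, so that its head redex is indeed $(\la{x} M)N$ and $H$ contracts exactly that redex, and (ii) using the determinism of head reduction to phrase ``the head-reduction ends'' as finiteness of the $H$-iteration, which is what makes prepending a single step harmless.
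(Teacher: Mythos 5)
Your argument is correct: the paper states this lemma without proof (treating it as the classical saturation property from reducibility arguments à la Krivine), and your one-step unfolding --- observing that $((\la{x}M)N)N_1\dots N_n$ matches the redex clause of $H$ with an empty abstraction prefix, so its head-reduction sequence is exactly that of $\subst{M}{x}{N}N_1\dots N_n$ with one term prepended --- is precisely the standard justification. No gaps.
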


 We set $ I_{HN} : A \to ((\wp \Lambda)^{\ast}, \subseteq ) $ to be the functor such that for all $ a \in A , $ $ I_{HN}(a) = \mathcal{HN} ,$ the action on morphisms being the trivial one. We define in the same way $ I_{N} $ and $ I_{SN} . $

\begin{lemma}\label{inclR}  For all $ a \in D_A $ we have that $ \mathcal{HN}_{0} \subseteq \sem{I_{HN}}{a} \subseteq \mathcal{ HN } . $

\end{lemma}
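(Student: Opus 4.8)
The plan is to prove the two inclusions $\mathcal{HN}_0 \subseteq \sem{I_{HN}}{a}$ and $\sem{I_{HN}}{a} \subseteq \mathcal{HN}$ simultaneously by induction on the structure of the type $a \in D_A$. Recall that $D_A$ is the filtered colimit of the $D_n$, so a type is either an atom $o \in \Ob{A}$ or of the form $\tyl \Rightarrow \ty = \iota(\seq{a_1,\dots,a_k}, \ty)$; the induction is on the height of this tree. The base case, $a = o$ atomic, is immediate since $\sem{I_{HN}}{o} = I_{HN}(o) = \mathcal{HN}$ by definition of $I_{HN}$, and $\mathcal{HN}_0 \subseteq \mathcal{HN}$ because any head-normal form reached from $xN_1\cdots N_n$ terminates (indeed such a term is already a head-normal form). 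So both inclusions collapse to $\mathcal{HN}_0 \subseteq \mathcal{HN} \subseteq \mathcal{HN}$.

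For the inductive step, take $a = \tyl \Rightarrow \ty$ with $\tyl = \seq{a_1,\dots,a_k}$, so that $\sem{I_{HN}}{a} = \sem{I_{HN}}{\tyl} \Rightarrow \sem{I_{HN}}{\ty}$, where $\sem{I_{HN}}{\tyl} = \bigcap_{i=1}^k \sem{I_{HN}}{a_i}$. For the first inclusion, let $M = xN_1\cdots N_n \in \mathcal{HN}_0$ and let $N \in \sem{I_{HN}}{\tyl}$. I need $MN = xN_1\cdots N_n N \in \sem{I_{HN}}{\ty}$. But $xN_1\cdots N_n N \in \mathcal{HN}_0$, so by the induction hypothesis applied to the (strictly smaller) type $\ty$ we get $xN_1\cdots N_n N \in \sem{I_{HN}}{\ty}$, which is exactly what is needed; hence $M \in \sem{I_{HN}}{\tyl}\Rightarrow\sem{I_{HN}}{\ty}$. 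For the second inclusion, let $M \in \sem{I_{HN}}{a} = \sem{I_{HN}}{\tyl}\Rightarrow\sem{I_{HN}}{\ty}$. Pick any variable $y$ not free in $M$; by the induction hypothesis for each $a_i$ we have $y \in \mathcal{HN}_0 \subseteq \sem{I_{HN}}{a_i}$, hence $y \in \sem{I_{HN}}{\tyl}$ (note $\mathcal{V} \subseteq \mathcal{HN}_0$). Then $My \in \sem{I_{HN}}{\ty}$, and by the induction hypothesis $\sem{I_{HN}}{\ty} \subseteq \mathcal{HN}$, so $My \in \mathcal{HN}$, i.e. the head reduction of $My$ terminates. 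Since applying a term to a fresh variable does not create new head redexes — the head reduction of $My$ restricted to its prefix $M$ is exactly the head reduction of $M$ — the head reduction of $M$ also terminates, so $M \in \mathcal{HN}$.

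The step I expect to be the main obstacle is the last one: arguing carefully that termination of the head reduction of $My$ entails termination of the head reduction of $M$. This is the standard "expansion by a fresh variable" argument from reducibility proofs, and it needs the observation that head reduction of $My$ proceeds by head-reducing $M$ until (if ever) $M$ reaches an abstraction $\lambda z.M'$, after which the redex $(\lambda z. M')y$ is fired; in all cases the head-reduction sequence of $M$ embeds into that of $My$, so finiteness transfers. One should also note that $\mathcal{HN}$ being saturated (the preceding lemma) is what guarantees the sets $\sem{I_{HN}}{\delta}$ are saturated, which is used implicitly, and that Lemma~\ref{functR} is not needed here — the monotonicity of realizer sets along morphisms plays no role in this particular lemma, only the inductive structure of types does.
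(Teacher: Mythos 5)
Your proof is correct and is exactly the standard interleaved induction on types from Krivine-style reducibility arguments; the paper states Lemma~\ref{inclR} without any proof, clearly intending precisely this classical argument (and you are right that Lemma~\ref{functR} plays no role here). The only step deserving the care you already flag is the passage from termination of the head reduction of $My$ to that of $M$, which is the usual fresh-variable expansion fact and goes through by induction on the length of the head reduction of $My$, distinguishing whether $M$ is an abstraction (in which case the first step of $My$ is $(\lambda z.M')y \to M'[y/z]$ and one uses that renaming by a fresh variable preserves head-reduction length) or not (in which case the head step of $My$ is the head step of $M$).
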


\begin{lemma}\label{nteo} Let $ M \in \Lambda . $  If $ M $ is typable in the system $ E^{S}_A $ then the head-reduction of $ M $ ends.\end{lemma}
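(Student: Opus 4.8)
The plan is to use the reducibility machinery built up in Lemmas \ref{functR}, \ref{adeq}, \ref{inclR} together with the congruence-theoretic type system $E^{S}_A$. First I would suppose $M$ is typable in $E^{S}_A$, so there is a derivation $\Gamma \vdash M : \ty$ with $\Gamma = x_1 : \tyl_1, \dots, x_n : \tyl_n$ and $\vec{x} = \supp{\Gamma} \supseteq \fv{M}$. The goal is to produce a substitution instance of $M$ that lands in $\mathcal{HN}$ and then transfer this back to $M$ itself.

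The key steps, in order, are as follows. I would instantiate Lemma \ref{adeq} with the interpretation $I_{HN}$: it suffices to exhibit terms $N_i \in \sem{I_{HN}}{\tyl_i}$ for each $i$, since then $\subst{M}{x_1,\dots,x_n}{N_1,\dots,N_n} \in \sem{I_{HN}}{\ty}$, and by Lemma \ref{inclR} this set is contained in $\mathcal{HN}$, so the substituted term is head-normalizing. The natural choice is $N_i = x_i$ (the variable itself, or a fresh variable), which works because $\mathcal{V} \subseteq \mathcal{HN}_0$ and, by Lemma \ref{inclR}, $\mathcal{HN}_0 \subseteq \sem{I_{HN}}{\tyl_i}$ — here one uses that $\sem{I_{HN}}{\tyl_i} = \bigcap_j \sem{I_{HN}}{a_{i,j}}$ and each component contains $\mathcal{HN}_0$. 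Hence $\subst{M}{\vec{x}}{\vec{x}} = M \in \mathcal{HN}$, i.e. the head-reduction of $M$ ends. (If one prefers genuinely fresh variables $y_i$, then $M\sub{y_i}{x_i}$ is head-normalizing and head-reduction is stable under renaming, so $M$ is too.)

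The only subtlety, and the step I expect to require the most care, is the passage from ``$M$ is typable in $E^{S}_A$'' to ``there is a derivation with exactly the free variables of $M$ in the context, each assigned a type list of the right shape.'' One must check that the variable rule and the weakening implicit in the application rule (where the context $\Delta$ maps via $\morpCone$ into $\bigotimes_i \Gamma_i$) let us always arrange the context to contain precisely $\fv{M}$, possibly with empty type lists $\seq{}$ on the extra variables; and for those variables carrying $\seq{}$ one has $\sem{I_{HN}}{\seq{}} = \Lambda$, so any term (in particular the corresponding variable) realizes them. So no generality is lost. With that bookkeeping settled, the argument is a direct application of Lemmas \ref{adeq} and \ref{inclR}, and the proof is short.
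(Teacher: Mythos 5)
Your proof is correct and follows exactly the route the paper takes: the paper's own proof is the one-line remark that the lemma is a ``direct consequence of Lemma \ref{inclR} and Lemma \ref{adeq}'', and your argument --- instantiating Lemma \ref{adeq} at $I_{HN}$ with $N_i = x_i$, using $\mathcal{V} \subseteq \mathcal{HN}_0 \subseteq \sem{I_{HN}}{a}$ componentwise to realize each $\tyl_i$, and concluding $M \in \sem{I_{HN}}{\ty} \subseteq \mathcal{HN}$ --- is precisely the intended unfolding of that remark. The bookkeeping you flag about arranging the context to cover $\fv{M}$ with possibly empty lists is handled exactly as you describe, via $\sem{I_{HN}}{\seq{}} = \Lambda$.
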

\begin{proof}
Direct consequence of Lemma \ref{inclR} and Lemma \ref{adeq}. 
\end{proof}

\begin{theorem} Let $ M \in \Lambda. $ The following statements are equivalent.
\begin{enumerate}
\item  $  \sem{\vec{x}}{M} \neq \emptyset_{D, S D^{\length{\vec{x}}}} .  $
\item The head-reduction of $ M $ ends.
\item $ M $ is head-normalizable.
\end{enumerate} 
\end{theorem}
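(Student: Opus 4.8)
The plan is to close the cycle of implications $(3)\Rightarrow(1)\Rightarrow(2)\Rightarrow(3)$, assembling the results already proved.

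The implication $(3)\Rightarrow(1)$ is exactly Corollary \ref{hnormal}: if $M$ is head-normalizable, then $\sem{\vec{x}}{M}\neq\emptyset_{D,SD^{\length{\vec{x}}}}$. For $(1)\Rightarrow(2)$ I would first convert semantic non-emptiness into syntactic typability. By Theorem \ref{ITD} we have a natural isomorphism $\sem{\vec{x}}{M}\cong T_{D}(M)_{\vec{x}}$, so $\sem{\vec{x}}{M}\neq\emptyset_{D,SD^{\length{\vec{x}}}}$ means that there exist a context $\Delta\in SD^{\length{\vec{x}}}$ and a type $a\in D_A$ with $T_{D}(M)_{\vec{x}}(\Delta,a)\neq\emptyset$, that is, with a nonempty set of equivalence classes of typing derivations of conclusion $\Delta\vdash M:a$. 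In particular $M$ is typable in $E^{S}_{A}$, and Lemma \ref{nteo} then yields that the head-reduction of $M$ ends. Finally, $(2)\Rightarrow(3)$ is immediate from the definition of $H(-)$: if the head-reduction sequence $M, H(M), H(H(M)),\dots$ is finite, its last term $N$ satisfies $H(N)=N$, hence $N$ is a head-normal form reachable from $M$ by $\beta$-reduction, so $M$ is head-normalizable.

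The only point requiring a little care is the passage in $(1)\Rightarrow(2)$: one must ensure that non-emptiness of $\sem{\vec{x}}{M}$ for some choice of $\vec{x}\supseteq\fv{M}$ genuinely produces a derivation in the system $E^{S}_{A}$ with an honest intersection type context $\Delta$, as required by the hypothesis of Lemma \ref{nteo}. This is precisely what the type-theoretic presentation of the denotation via Theorem \ref{ITD} delivers, so there is no real obstacle — the proof is just the concatenation of Corollary \ref{hnormal}, Theorem \ref{ITD}, Lemma \ref{nteo} and the unfolding of $H(-)$.
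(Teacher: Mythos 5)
Your proof is correct and matches the paper's own argument: the paper establishes the same cycle $(1)\Rightarrow(2)$ via Theorem \ref{ITD} and Lemma \ref{nteo}, $(2)\Rightarrow(3)$ by definition, and $(3)\Rightarrow(1)$ via Corollary \ref{hnormal}. Your additional care about converting semantic non-emptiness into an honest typing derivation is exactly the role Theorem \ref{ITD} plays there as well.
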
\begin{proof} $ (1) \Rightarrow (2)$ Corollary of Theorems \ref{ITD} and Lemma \ref{nteo}. $ (2) \Rightarrow (3) $ immediate by definition. $ (3) \Rightarrow (1) $  by Theorem \ref{semden} and Corollary \ref{hnormal}.
\end{proof}

\section{Worked Out Examples}\label{app}

We present two concrete constructions of the distributor-induced denotational semantics that we introduced in the previous sections. We choose the examples of the linear resource monad (symmetric monoidal strict completion) and of the cartesian one (cartesian strict completion). Those two examples are particularly relevant since they correspond to the categorification of the two best known intersection type systems: the linear logic induced Gardner-De Carvalho System $ \mathcal{R} $ \cite{gard:int, carv:sem} and the original Coppo-Dezani System $ D \Omega $ \cite{dez:int}. The first one is non-idempotent, the second one is idempotent. In our setting, the idempotency issue is replaced by an operational one: which operations do we allow on intersections? 
 \subsection{Example 1: Linear Resources}
 The content of this subsection corresponds to a Call-by-Name version of \cite{ol:bang}. We present the non-idempotent intersection type system $ \mathcal{R} $. That system has a categorical counterpart in the linear logic induced relational model for pure $ \lambda $-calculus \cite{carv:sem}. The intersection type is given by multisets. In our case, we achieve a non-idempotent and commutative (up to isos) intersection type system applying our construction in the special case where the resource monad $S$ is the $ 2$-monad for symmetric strict monoidal categories. The corresponding intersection type system is system $ R_A $ in Figure \ref{intCart}.

In the linear case, we can prove the head-normalization theorem in a combinatorial way. We set $ \sem{\vec{x}}{M}^{R_A} $ to be the denotation of $ M $ in the case where $ S$ is the linear resource monad. We define the \emph{size} $ \size{\pi} $ of a typing derivation $\pi $ as the number of application rules that appear in it.

By an easy inspection of the definitions, we have that the size is stable under actions and under congruence: if $ \pi, \pi' \in R_A $ and $ \pi \sim \pi' $ then $ \size{\pi} = \size{\pi'} . $

Let $ \rho_{\Delta, \ty} : \sem{\vec{x}}{\Tm}^{R_A}(\Delta, \ty) \cong T_{D}(M)_{\vec{x}}^{R_A}(\Delta, \ty) $ be the isomorphism given by Theorem \ref{denint}. Then for $ \alpha \in  \sem{\vec{x}}{\Tm}(\Delta, \ty)$ we set $ \size{\alpha} = \size{\rho_{\Delta, \ty}(\alpha)} . $ Given $ \vec{\alpha} = \seqdots{\alpha}{1}{k} $ with $ \alpha_i \in \sem{\vec{x}}{M},  $ we set $ \size{\vec{\alpha}} = \sum_{i=1}^k \size{\alpha_i} . $

\begin{lemma}\label{fsub}
Let $ M, N $ be two $\lambda $-terms,  $ \vec{x} \supseteq {\fv{M} \cup \fv{N}   } $ with $ x \notin \vec{x} $ and
\[ \subis{M}{x}{N}{\vec{x}}_{\Delta, \ty}   : Sub_{\vec{x}}^{M,x,N}(\Delta, \ty) \cong  \sem{\vec{x}}{\subst{M}{x}{N}}^{R_A}(\Delta, \ty)   \]
be the natural isomorphism given by Theorem \ref{semden}. For all $ \alpha = \widetilde{\seq{\pi, \vec{\psi}, \eta}} \in Sub_{\vec{x}}^{M,x,N}(\Delta, \ty) , $ we have
\[   \size{\subis{M}{x}{N}{\vec{x}}_{\Delta, \ty} (\alpha) } = \size{\pi} + \size{\vec{\psi}} .   \]
\end{lemma}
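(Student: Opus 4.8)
The plan is to prove this by induction on the structure of $M$, mirroring exactly the induction used in Lemma \ref{sub} to construct the isomorphism $\subis{M}{x}{N}{\vec{x}}$. Since the size $\size{\cdot}$ counts application rules and is stable under the right and left actions and under the congruence $\sim$ (as already observed before the statement), the size descends to a well-defined function on equivalence classes, and on the coend components $\sem{\vec{x}\oplus\seq{x}}{M}(\Gamma_0\oplus\seq{\vec{a}},\ty)$ and $\sem{\vec{x}}{N}(\Gamma_i,a_i)$ that make up $Sub_{\vec{x}}^{M,x,N}(\Delta,\ty)$. The key point is that each step in the construction of $\subis{M}{x}{N}{\vec{x}}$ is built from coend manipulations that only permute, re-associate, or glue together derivations without introducing or deleting application rules — except in the one place where substitution genuinely plugs $N$-derivations into a variable leaf of the $M$-derivation.

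First I would treat the base cases. If $M = x$, then $\subst{M}{x}{N} = N$, the derivation $\pi$ of $x$ is an axiom with $\size{\pi} = 0$, and $\subis{M}{x}{N}{\vec{x}}$ identifies $\sem{}{N}$-derivations directly with themselves, so $\size{\subis{M}{x}{N}{\vec{x}}_{\Delta,\ty}(\alpha)} = \size{\vec{\psi}} = 0 + \size{\vec{\psi}}$. If $M = y \neq x$, then $x$ does not occur, $\vec{\psi}$ is empty (the relevant list $\vec{a}$ is $\seq{}$), $\subst{M}{x}{N} = y$, and both sides equal $\size{\pi} = 0$. For the inductive cases, $M = \lambda y.M'$ reduces immediately to the inductive hypothesis on $M'$ since the abstraction rule adds no application and $\subis{\lambda y.M'}{x}{N}{\vec{x}}$ is built from $\subis{M'}{x}{N}{\vec{x}\oplus\seq{y}}$ by the abstraction clauses of the denotation and of the left/right actions, none of which touch application rules.

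The application case $M = M_1 M_2$ is where the real work is. Here $\subst{M}{x}{N} = (\subst{M_1}{x}{N})(\subst{M_2}{x}{N})$, and the denotation $\sem{\vec{x}}{(\subst{M_1}{x}{N})(\subst{M_2}{x}{N})}$ unfolds via Figure \ref{denL} into a coend gluing an $M_1$-part and $k$ copies of an $M_2$-part through a context morphism, adding exactly one application rule on top. Using the inductive hypotheses for $M_1$ and $M_2$ together with the explicit coend computation performed in the proof of Lemma \ref{sub}, one checks that the additivity of size over the gluing — $\size{\text{glued derivation}} = \size{\text{}M_1\text{-part}} + \size{\text{}M_2\text{-parts}} + 1$ — together with the additivity of the contribution of $N$-derivations distributed between the $M_1$-occurrences and $M_2$-occurrences of $x$, yields the claimed equation. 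The bookkeeping here is purely that all the reshuffling morphisms coming from Proposition \ref{charRE} and the Seely equivalence are compositions of structural maps, hence act on derivations only through the (size-preserving) left/right actions.

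The main obstacle I expect is not conceptual but notational: one must carefully track, through the coend manipulations of Lemma \ref{sub}'s application case, which sub-derivation of the final $\subst{M}{x}{N}$-derivation came from which piece of $\seq{\pi,\vec\psi,\eta}$, so that the sum $\size{\pi}+\size{\vec\psi}$ is accounted for exactly once and the $+1$ from the new application rule is correctly matched against the $+1$ already present in $\size{\pi}$ when $M_1M_2$ contributes its own outermost application. The cleanest way to handle this is to phrase the induction so that it carries along the explicit description of $\subis{M}{x}{N}{\vec{x}}$ on representatives (not just on classes), invoking stability of size under $\sim$ only at the very end to descend to equivalence classes.
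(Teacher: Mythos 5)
The paper states Lemma \ref{fsub} without proof, so there is nothing to compare against directly; your argument is the evidently intended one (induction on $M$ following the coend manipulations of Lemma \ref{sub}, using that $\size{\cdot}$ is invariant under the left/right actions and the congruence, with the only size change coming from plugging the $N$-derivations into the variable leaves for $x$) and it is correct. Your handling of the application case correctly isolates where linearity of $R_A$ matters — the $N$-derivations are partitioned among the premises, each used exactly once — which is precisely why the sizes add exactly and why the lemma is stated only for the linear resource monad.
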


\begin{theorem}
Let $ M, N  \in \Lambda  .$  We have a natural isomorphism\[ \varphi_{\Delta,a} : \sem{\vec{x}}{M}^{R_A}(\Delta,a) \cong \sem{\vec{x}}{H(M)}^{R_A}(\Delta, \ty)   \] such that for $ \alpha \in \sem{\vec{x}}{M}^{R_A}(\Delta,a) , \size{\varphi_{\Delta,a}(\alpha)} \lneq \size{\alpha} . $
\end{theorem}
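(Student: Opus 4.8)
The parameter $N$ is unused; we read the statement as applying when $M$ is not a head-normal form (if $H(M)=M$ the strict inequality cannot hold, $\sem{\vec x}{M}^{R_A}$ being nonempty for a head-normal form), and we therefore fix $M = \lambda x_1 \dots \lambda x_m.\, (\lambda y.P)Q\, N_1\cdots N_n$, so that $H(M) = \lambda x_1 \dots \lambda x_m.\, \subst{P}{y}{Q}\, N_1\cdots N_n$. Writing $\ctxp{\cdot}$ for the head context $\lambda x_1 \dots \lambda x_m.\, \hole{\cdot}\, N_1\cdots N_n$, so that $M = \ctxp{(\lambda y.P)Q}$ and $H(M) = \ctxp{\subst{P}{y}{Q}}$, the plan is to construct $\varphi$ by induction on the head context $\ctxp{\cdot}$, in fact establishing the sharper equality $\size{\varphi_{\Delta,a}(\alpha)} = \size{\alpha}-1$.

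For the base case $\ctxp{\cdot}=\hole{\cdot}$, I would first observe that unfolding the abstraction and application clauses of Figure \ref{denL} shows $\sem{\vec x}{(\lambda y.P)Q}^{R_A}$ and $Sub^{P,y,Q}_{\vec x}$ to be \emph{literally the same} functor — the abstraction side condition $a=\iota(\vec a',a')$ being automatic, since the first factor is evaluated at a type $\iota(\vec a, a)$ — so one may take $\varphi \eqdef \subis{P}{y}{Q}{\vec x}$, the natural isomorphism of Lemma \ref{sub}. For the size, transport $\alpha$ along $\rho_{\Delta,a}$ (Theorem \ref{ITD}) to a class of typing derivations of $(\lambda y.P)Q$; by the rules of Figure \ref{IntD2} any such derivation consists of a derivation $\pi$ of $P$ (with $y$ in context) followed by one abstraction rule and then one application rule combining the $Q$-derivations $\vec\psi$ and a mediating context morphism. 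Since the abstraction rule contributes $0$ and the application rule $1$ to $\size{}$ (which is well defined on classes, being stable under $\sim$ and the actions), $\size{\alpha}=1+\size{\pi}+\size{\vec\psi}$, whereas Lemma \ref{fsub} gives $\size{\varphi_{\Delta,a}(\alpha)}=\size{\pi}+\size{\vec\psi}=\size{\alpha}-1$.

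The inductive step splits in two. If $\ctxp{s}=\lambda z.\ctxtwop{s}$, the abstraction clause identifies $\sem{\vec x}{\ctxp{s}}^{R_A}(\Delta,a)$ with $\sem{\vec x\oplus\seq{z}}{\ctxtwop{s}}^{R_A}(\Delta\oplus\seq{\vec a'},a')$ when $a=\iota(\vec a',a')$ (and with $\emptyset$ otherwise), uniformly in $s$; hence $\varphi$ for $\ctxp{\cdot}$ is just $\varphi$ for $\ctxtwop{\cdot}$ (available by the induction hypothesis) relabelled, and since an abstraction rule adds $0$ to the size the identity $\size{\varphi(\alpha)}=\size{\alpha}-1$ is inherited. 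If $\ctxp{s}=(\ctxtwop{s})\,R$, the application clause of Figure \ref{denL} exhibits $\sem{\vec x}{\ctxp{s}}^{R_A}(\Delta,a)$ as a coend having $\sem{\vec x}{\ctxtwop{s}}^{R_A}$ as one factor, together with factors for $R$ and for a mediating context morphism that do not mention $s$; plugging the inductive $\varphi$ into that single factor, the universal property of the coend yields a natural isomorphism $\varphi$ for $\ctxp{\cdot}$, and since traversing the one extra application rule adds the same quantity to the size on both sides, $\size{\varphi(\alpha)}=\size{\alpha}-1$ once more. Naturality in $(\Delta,a)$ is inherited throughout from that of $\subis{P}{y}{Q}{\vec x}$ and the functoriality of the constructions of Figure \ref{denL}; in particular $\size{\varphi_{\Delta,a}(\alpha)}\lneq\size{\alpha}$.

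I do not expect a deep obstacle: the substantial ingredient is Lemma \ref{fsub}, already in hand, which pins the size of the substitution isomorphism's output to \emph{exactly} $\size{\pi}+\size{\vec\psi}$. Given that, the only points requiring care are (i) that $\sem{\vec x}{(\lambda y.P)Q}^{R_A}=Sub^{P,y,Q}_{\vec x}$ holds on the nose, so that Lemma \ref{fsub} applies with no intervening isomorphism that could perturb the size, and (ii) that the single application rule removed at the head redex is genuinely present — which for system $R_A$ is automatic, since every derivation whose conclusion is an application ends with an application rule, so the drop is real and the inequality strict. Propagating the size identity through the head context is routine bookkeeping and follows the pattern of the coend manipulations already carried out for Theorem \ref{sound} and Lemma \ref{fsub}.
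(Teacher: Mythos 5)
Your proposal is correct and follows essentially the same route as the paper, whose entire proof is ``direct corollary of the former lemma'', i.e.\ of Lemma~\ref{fsub}: you identify $\sem{\vec x}{(\la{y}P)Q}^{R_A}$ with $Sub^{P,y,Q}_{\vec x}$ on the nose, apply Lemmas~\ref{sub} and~\ref{fsub} to get the isomorphism together with the exact size drop of $1$ at the redex, and propagate through the head context by a routine induction. Your two side remarks --- that $N$ plays no role in the statement and that the strict inequality forces $M$ not to be a head-normal form --- are accurate observations about the statement rather than gaps in the argument.
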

\begin{proof}
Direct corollary of the former lemma.
\end{proof}

\begin{theorem}
Let $ M \in \Lambda $. If $ \sem {\vec{x}} {M}^{R_A}  \neq \emptyset_{D, SD^{\length{\vec{x}}}} $ the head reduction of $ M $ ends. \end{theorem}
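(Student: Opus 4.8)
The plan is to prove this by well-founded induction on the size of a typing derivation, since the whole combinatorial content has already been isolated in the preceding theorem. Concretely I would establish the slightly stronger statement: for every $M \in \Lambda$, every $\vec{x} \supseteq \fv{M}$, every intersection type context $\Delta$, every type $a$, and every $\alpha \in \sem{\vec{x}}{M}^{R_A}(\Delta, a)$, the head-reduction of $M$ ends. The theorem then follows at once, since $\sem{\vec{x}}{M}^{R_A} \neq \emptyset_{D, SD^{\length{\vec{x}}}}$ means precisely that $\sem{\vec{x}}{M}^{R_A}(\Delta, a)$ is inhabited for some $\Delta$ and $a$.

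Before the induction I would record two routine facts. First, $\size{-}$ is well-defined on semantic elements: by the remark preceding Lemma \ref{fsub} the size of a typing derivation is invariant under the congruence $\sim$ and under the left and right actions, hence descends to $\sim$-classes, so via the isomorphism $\rho_{\Delta,a}$ (Theorem \ref{ITD}) we may set $\size{\alpha} \eqdef \size{\rho_{\Delta,a}(\alpha)} \in \N$. Second, head-reduction introduces no free variables: it is an instance of $\beta$-reduction and the substitution $M'[N/x]$ has free variables contained in $(\fv{M'} \setminus \{x\}) \cup \fv{N}$, so $\vec{x} \supseteq \fv{M}$ implies $\vec{x} \supseteq \fv{H(M)}$, and all the denotations below live over the same variable list $\vec{x}$. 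Now I induct on $\size{\alpha}$. If $M$ is a head-normal form then $H(M) = M$ and the head-reduction of $M$ ends trivially. Otherwise $M = \la{\vec{y}} (\la{x} M')\, N\, \vec{N}$ and $M \to_\beta H(M) = \la{\vec{y}} M'[N/x]\, \vec{N}$; the preceding theorem gives a bijection $\varphi_{\Delta,a} : \sem{\vec{x}}{M}^{R_A}(\Delta, a) \cong \sem{\vec{x}}{H(M)}^{R_A}(\Delta, a)$ with $\size{\varphi_{\Delta,a}(\alpha)} \lneq \size{\alpha}$. Thus $\sem{\vec{x}}{H(M)}^{R_A}(\Delta, a)$ is inhabited by $\varphi_{\Delta,a}(\alpha)$, whose size is strictly smaller, so the induction hypothesis applies to $H(M)$ and yields that the head-reduction of $H(M)$ ends; since the head-reduction of $M$ is the single step $M \to_\beta H(M)$ followed by that of $H(M)$, it ends as well.

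I do not expect any real obstacle here: once the preceding theorem is in place this is purely a termination argument resting on the well-foundedness of $(\N, <)$. The only points demanding a word of care are the two recorded above — that $\size{-}$ genuinely passes to elements of the distributor (invariance under $\sim$ and under the actions) and that the ambient variable list can be kept fixed along the head-reduction — and both are immediate from the earlier development.
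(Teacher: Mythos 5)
Your proof is correct and follows essentially the same route as the paper: both arguments take an inhabitant $\alpha$ of $\sem{\vec{x}}{M}^{R_A}(\Delta,a)$, push it through the size-decreasing isomorphism $\varphi_{\Delta,a}$ of the preceding theorem, and conclude by induction on $\size{\alpha}$. Your version is merely more explicit about two points the paper leaves implicit — that $\size{-}$ descends to $\sim$-classes and that the head-normal-form base case must be split off (since there $\varphi$ is the identity and no strict decrease is possible) — which is a welcome clarification rather than a divergence.
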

\begin{proof}
We have that, for $ \varphi : \sem{\vec{x}}{M}^{R_A} \cong \sem{\vec{x}}{H(M)}^{R_A} . $ If $ \sem {\vec{x}} {M}^{R_A}  \neq \emptyset_{D, SD^{\length{\vec{x}}}} $  then $ \sem {\vec{x}} {H(\Tm)}^{R_A}  \neq \emptyset_{D, SD^{\length{\vec{x}}}}$. We consider $ \alpha \in \sem {\vec{x}} {M}^{R_A} (\Delta,a) $ for some $ \seq{\Delta, \ty} \in SD^{\length{\vec{x}}} \times D . $ Then, by the former theorem, $ \size{\varphi_{\Delta, \ty}} < \size{\alpha} . $ Then we can apply the IH 	and conclude.
\end{proof}

\begin{figure*}[t]

\begin{align*}
 \begin{prooftree}
\hypo{ f : a' \to a }
	\infer1[]{ x_1 :\seq{}, \dots, x_i : \seq{a'}, \dots, x_n :\seq{} \vdash x_i : a }
\end{prooftree} \qquad & \qquad \begin{prooftree}
\hypo{ \top_{\tyl_1} : \vec{a}_{1} \to \seq{}, \dots, f_i : \vec{a} \to \seq{a} , \dots, \top_{\tyl_n} :  \vec{a}_{n} \to \seq{}}
	\infer1[]{ x_1 :\vec{a}_1, \dots, x_i : \vec{a}, \dots x_n :\vec{a}_n \vdash x_i : a }
\end{prooftree} \\
    \begin{prooftree}
\hypo{ \Delta, x: \vec{a} \vdash M : a }
	\infer1[]{ \Delta \vdash \lambda x. M : \vec{a} \Rightarrow a }
\end{prooftree} \qquad & \qquad \begin{prooftree}
\hypo{ \Delta, x: \vec{a} \vdash M : a }
	\infer1[]{ \Delta \vdash \lambda x. M : \vec{a} \Rightarrow a }
\end{prooftree}  \\
 \begin{prooftree}
\hypo{ \Gamma_0 \vdash M : \vec{a} \Rightarrow a }\hypo{( \Gamma_i \vdash N : a_{i})_{i \in [k]} } \hypo{\eta : \Delta \to \bigotimes_{j = 0}^{k} \Gamma_j }
\infer3[]{ \Delta \vdash MN : a }
\end{prooftree} \qquad & \qquad
\begin{prooftree}
\hypo{ \Delta \vdash M : \vec{a} \Rightarrow a }\hypo{(\Delta \vdash N : a_{i})_{i \in [k]} }
\infer2[]{ \Delta \vdash MN : a }
\end{prooftree}
\end{align*}

Where $ \vec{a} = \seqdots{\ty}{1}{k} . $ 
\caption{Intersection type systems $R_A $ and $ C_A $.} 
\hrulefill
\label{intCart}
\end{figure*}

\begin{figure*}
\begin{gather*}
 \begin{prooftree}
\hypo{\pi_0}\ellipsis{}{ \Delta \vdash \Tm : \vec{b} \Rightarrow \ty }\hypo{[f_{i}]\pi_{\alpha(i)}}
\ellipsis{}{\Delta \vdash \Tm : b_i}
\delims{\left(}{\right)_{i =1}^{k'}}
	\infer2[]{ \Delta \vdash \appl{\Tm}{\Tmtwo} : \ty }
\end{prooftree} \quad \sim  \begin{prooftree}
\hypo{[ \seq{\alpha, \vec{f}} \Rightarrow 1]  \pi_0 }\ellipsis{}{ \Delta \vdash \Tm : \tyl \Rightarrow \ty} \hypo{\pi_i }
\ellipsis{}{\Delta \vdash N : \ty_i}
\delims{\left(}{\right)_{i =1}^{k}}
	\infer2[]{ \Delta \vdash \appl{\Tm}{\Tmtwo} : \ty }
\end{prooftree} \end{gather*} 
Where $   \seq{\alpha, f_1, \dots, f_{k'}} : \tyl = \seqdots{\ty}{1}{k}  \to \vec{b} = \seqdots{b}{1}{k'}  .$ \caption{Congruence on cartesian typing derivations.}\hrulefill
\label{cartcong}
\end{figure*}

\begin{example}
We provide a simple example of reduction of typing derivations to ease the understanding of the congruence's role in establishing the natural isomorphisms. Consider $ \Tm = (\la{x} x) y . $ We type it with the following typing derivations:
\[  \pi_1 =  \begin{prooftree}   \hypo{h \circ f : a \to b}\infer1{x : \seq{a} \vdash x : b}\infer1{ \vdash \la{x} x : \seq{a} \Rightarrow b }     \hypo{g : c \to a}\infer1{y : \seq{c} \vdash y : a}   \hypo{1} \infer3{ y : \seq{c} \vdash (\la{x} x) y : b }     \end{prooftree}      \] \[ \pi_2 =  \begin{prooftree}   \hypo{ h \circ f' : d \to b}\infer1{x : \seq{a} \vdash x : b}\infer1{ \vdash \la{x} x : \seq{d} \Rightarrow b }     \hypo{g' : c \to d}\infer1{y : \seq{c} \vdash y : d}   \hypo{1} \infer3{ y : \seq{c} \vdash (\la{x} x) y : b }     \end{prooftree}      \]
suppose that $ f \circ g =  f' \circ g'  $ and $ h : b \to b, f : a \to b, f' : d \to b . $ We have that $ \pi_1 \sim \pi_2 .$ Indeed, by the first rule of Figure \ref{congr}:
\[ \pi_1 \sim    \begin{prooftree}   \hypo{h : b \to b}\infer1{x : \seq{b} \vdash x : b}\infer1{ \vdash \la{x} x : \seq{b} \Rightarrow b }     \hypo{ f \circ g : c \to b}\infer1{y : \seq{c} \vdash y : b}    \hypo{1} \infer3{ y : \seq{c} \vdash (\la{x} x) y : b }     \end{prooftree}    \] \[    \pi_2 \sim    \begin{prooftree}   \hypo{h : b \to b}\infer1{x : \seq{b} \vdash x : b}\infer1{ \vdash \la{x} x : \seq{b} \Rightarrow b }     \hypo{ f' \circ g' : c \to b}\infer1{y : \seq{c} \vdash y : b}  \hypo{1} \infer3{ y : \seq{c} \vdash (\la{x} x) y : b }     \end{prooftree}    \] 
and by the hypothesis that $ f \circ g =  f' \circ g'$ we can conclude by transitivity. In particular, this means that the quotient identify all couple of morphisms leading to the same composition. Now, we have that $ \Tm \to y . $ Consider the following typing derivation of $ y : $
\[ \pi_3 = \begin{prooftree}   \hypo{h \circ (f \circ g) : c \to b}\infer1{y : \seq{c} \vdash y : b}     \end{prooftree}     \]
By an easy inspection of the definitions we have that for $  \varphi_{\seq{c}, b} : \sem{\seq{y}}{\Tm}(\seq{c}, b) \cong  \sem{\seq{y}}{y}(\seq{c}, b)$, $\varphi_{\seq{c}, b}(\tilde{\pi_1}) =  \pi_3 ,$ where we keep implicit the isomorphism given by Theorem \ref{intdist}. There is then a nice correspondence between \emph{substitution} on the term side and \emph{composition} on the morphism side, that validates the basic intuition of categorical semantics\footnote{The natural isomorphism $\varphi_{\seq{c}, b} : \sem{\seq{y}}{\Tm}(\seq{c}, b) \cong  \sem{\seq{y}}{y}(\seq{c}, b) $ is a particular instance of the Yoneda Lemma for coends \cite{fosco:coend}.  }.  
\end{example}

\subsection{Example 2: Cartesian Resources} \label{cart}

We now focus on the type theoretic semantics induced by the cartesian resource monad. In this framework, a resource can be copied and deleted at wish. 

When $SA $ is cartesian, the Day convolution on $ PSA $ is isomorphic to the cartesian product. Hence, we have the following natural isomorphism\footnote{Simply observing that the tensor product over $SD^{n} $ is cartesian and then exploiting the natural bijection given by the adjunction.}
\[    G \circ F (\vec{a},c) \cong \int^{\seqdots{b}{1}{k} \in SD} G(\vec{b}, c) \times \prod_{i \in [n] } F(\vec{a}, b_i)     \]
For $ F,G : D \rightsquigarrow D  $ in $\SCat. $ This Kleisli bicategory is known as the bicategory of \emph{cartesian distributors} \cite{fiore:cart}. By straightforward coend manipulations, we derive the type system $ C_A$ described in Figure  \ref{intCart}. Actions on typing derivations are defined in the straightforward way. The equivalence on typing derivation in this case is generated only by the rule of Figure \ref{cartcong}, since now the coend on contexts disappeared. It is worth noting that the cartesian category $SD_A $ admits all the basic axioms imposed on the preorder over idempotent intersection types \cite{dez:intl}. This means that our construction generalizes the standard subtyping relation, as expected. However, the two conditions 
\[ \pi_{i, 2} : \vec{a}_1 \oplus \vec{a}_2 \to \vec{a}_i         \qquad  c_{\vec{a}} : \vec{a} \to \vec{a} \oplus \vec{a}   \]
do not determine an idempotency $ \vec{a} \oplus \vec{a} \cong \vec{a} .$ In our categorified setting, idempotency is replaced by the possibility to perform two operations on resources: \emph{copying} and \emph{deleting}.   

\begin{example}
We provide some example of typing derivations in system $ C_A  ,$ giving also some intuition for what concerns the congruence on typing derivations. 
\begin{enumerate}
\item Let us type the term $ M = (\la{x} (x x) x) . $ Let $ b =  \seq{a} \Rightarrow \seq{a} \Rightarrow \ty.$ Consider the following typing derivation $ \pi : $

\scalebox{0.80}{\parbox{1.05\linewidth}{\[  \begin{prooftree}
\hypo{\pi_1 : \seq{b, \ty} \to \seq{b}}\infer1{x : \seq{b, \ty} \vdash x : \seq{a} \Rightarrow \seq{a} \Rightarrow \ty } \hypo{\pi_2 : \seq{b, \ty} \to \seq{\ty}}\infer1{x : \seq{b, \ty} \vdash x : \ty}  \infer2{x : \seq{b, \ty} \vdash xx : \seq{\ty} \Rightarrow \ty} \hypo{\pi_2 : \seq{b, \ty} \to \seq{\ty}}\infer1{x : \seq{b, \ty} \vdash x : \ty }\infer2{x : \seq{b, \ty} \vdash (xx)x : \ty} \infer1{\vdash \la{x} (xx)x : \seq{b, \ty} \Rightarrow \ty}
\end{prooftree}   \]}}
\end{enumerate}
\item Now consider the following typing derivation $ \rho : $

\[\begin{prooftree}
\hypo{\pi}\ellipsis{}{\vdash \la{x} (xx)x : \seq{b, \ty} \Rightarrow \ty} \hypo{\vdash N : b} \hypo{\vdash N :a}\infer3{\vdash (\la{x} (xx)x) N : \ty}
\end{prooftree} \]
and $ \pi'  :$

\scalebox{0.75}{\parbox{1.05\linewidth}{\[  \begin{prooftree}
\hypo{\pi_1 : \seq{b, \ty, \ty} \to \seq{b}}\infer1{x : \seq{b, \ty, \ty} \vdash x : \seq{a} \Rightarrow \seq{a} \Rightarrow \ty } \hypo{\pi_2 : \seq{b, \ty, \ty} \to \seq{\ty}}\infer1{x : \seq{b, \ty, \ty} \vdash x : \ty}  \infer2{x : \seq{b, \ty, \ty} \vdash xx : \seq{\ty} \Rightarrow \ty} \hypo{\pi_3 : \seq{b, \ty, \ty} \to \seq{\ty}}\infer1{x : \seq{b, \ty, \ty} \vdash x : \ty }\infer2{x : \seq{b, \ty, \ty} \vdash (xx)x : \ty} \infer1{\vdash \la{x} (xx)x : \seq{b, \ty, \ty} \Rightarrow \ty}
\end{prooftree}   \]}}

We have that $ \pi = [c^{\ast} \Rightarrow 1] \pi' $ where $ c^{\ast} = 1_{\seq{b}} \oplus c_{\seq{a}} . $ If we consider then the following derivation $ \rho' : $
\[\begin{prooftree}
\hypo{\pi'}\ellipsis{}{\vdash \la{x} (xx)x : \seq{b, \ty, \ty} \Rightarrow \ty} \hypo{\vdash N : b} \hypo{\vdash N :a} \hypo{\vdash N :a} \infer4{\vdash (\la{x} (xx)x) N : \ty}
\end{prooftree} \]
We have that $ \rho \sim \rho' $ by the rule of congruence of Figure \ref{cartcong}.
\end{example}

\section{Conclusions}

\subsubsection{Results}

Bringing together several independent results and perspectives, we gave a consistent argument in favour of considering the bicategory of distributors as an appropriate framework for a general theory of intersection types. We defined a family of Kleisli bicategories of distributors, parametric over a resource monad. We gave a sufficient condition for these Kleisli bicategories to be cartesian closed. We then defined non-extensional models for pure $\lambda $-calculus. We showed how each resource monad is equivalent to a particular intersection type construction. Each model that we presented can be seen as an appropriate \emph{category of types}. From this category of types we defined an intersection type system and, consequently, a \emph{proof relevant} denotational semantics. We then proved that these semantics are coherent with respect to solvability.

\subsubsection{Perspectives}

The flexibility of our approach opens a considerable amount of possible future investigations. From an abstract standpoint, it is tempting to go even a bit further in the direction of \cite{mazza:pol} and identify our construction of intersection type distributors with an interpretation morphism between the symmetric $ 2$-operad of $ \lambda $-terms and the bicategory $\Dist .$ This identification makes intuitively sense because of the strict connection between Kleisli bicategories of distributors and multicategories \cite{fiore:rel}. We leave all these speculations to future work. 

We believe that an appropriate presentation of the results of \cite[Chapter 4]{ol:thesis} about the relationship between intersection type distributors and $ \lambda $-terms \emph{rigid approximants} \cite{ol:group, tao:gen} would be of great interest. In particular, this would make explicit the relationship between our semantic approach and the one of \cite{tao:gen,tao:pdist}. 

The study of the relationship between our distributors induced semantics and Melliès \emph{template games} semantics \cite{mell:diff} would shed some light on the connection between intersection types, term approximants and game semantics. The natural starting point for this investigation would be a particular special case of the general construction of \cite{mell:diff}: the bicategory of spans over groupoids. What happens in that framework with the exponential modality is particularly interesting and could give a \emph{homotopic} flavour to our semantic perspective.

 Another possibility is the investigation of extensional collapse, in the sense of \cite{er:collapse}. We believe that connecting the approaches of the present work and of \cite{zeinab:prof} we could reach an understating of the semantic link between non-idempotent intersection types and idempotent ones in the bicategorical setting of distributors. 
 
 An extension  of our approach to probabilistic computation, algebraic $ \lambda $-calculus \cite{vaux:alg} would also be of great interest. This would be somehow related to \cite{tao:pdist}, \cite{manzo:nd} and \cite{dlago:pint}. 

Finally, another interesting question arises in the context of Multiplicative Exponential Linear Logic (MELL). Since the notion of experiment \cite{tort:exp} can be thought as the proof-net version of typing derivations, a possible extension of this work to that setting could give relevant information about the experiments reduction \cite{carv:weak}.


\printbibliography

\end{document}